\documentclass[12pt,a4page]{article}
\usepackage[T1]{fontenc}
\usepackage{lmodern, microtype}
\usepackage[left=1.25in, right=1.25in, top=1.25in, bottom=1.25in]{geometry}
\usepackage[small]{titlesec}
\usepackage{mathtools,comment,paralist}
\usepackage{epigraph}

\usepackage{amssymb,amsmath,amsthm,fancyhdr,bbm}
\usepackage{dsfont} %for indicator
\usepackage{float}
\usepackage{subcaption}
\usepackage{xcolor}
\usepackage{hyperref}
\hypersetup{
  colorlinks=true,
  linkcolor=blue,
  citecolor=black
}
\usepackage[capitalise]{cleveref}
\usepackage{natbib}
\setcitestyle{authoryear,open={(},close={)}}
\makeatletter

\newtheorem{theorem}{Theorem}
\newtheorem{lemma}{Lemma}
\newtheorem*{lemma*}{Lemma}

\newtheorem*{axiom*}{Axiom}
\newtheorem{proposition}{Proposition}

\newtheorem{corollary}{Corollary}
\newtheorem{example}[theorem]{Example}

\newtheorem{maintheorem}{Theorem}
\newtheorem*{theorem*}{Theorem}

\newtheorem{definition}{Definition}
\newtheorem*{definition*}{Definition}
\usepackage{graphicx}
\usepackage{booktabs}
\usepackage{pstricks, enumerate, pst-node, pst-text, pst-plot}

\usepackage{mleftright}
\usepackage{xparse}
\DeclareDocumentCommand\Pr{ m g }{\ensuremath{
    {   \IfNoValueTF {#2}
      {\mathbb{P}\mleft[{#1}\mright]}
      {\mathbb{P}\mleft[{#1}\middle\vert{#2}\mright]}
    }
}}
\DeclareDocumentCommand\E{ m g }{\ensuremath{
    {   \IfNoValueTF {#2}
      {\mathbb{E}\mleft[{#1}\mright]}
      {\mathbb{E}\mleft[{#1}\middle\vert{#2}\mright]}
    }
}}

\newcommand{\R}{\mathbb{R}}

\newcommand{\mP}{\mathbb{P}}

\newcommand{\abs}{\ge_{\rm abs}}
\newcommand{\rel}{\ge_{\rm rel}}
\newcommand{\norel}{\ngeq_{\rm rel}}

\newcommand{\ccv}{\ge_{\rm c}}

\newcommand{\sccv}{>_{\rm c}}

\usepackage{accents}

\def\ee{\mathrm{e}}
\def\cara{\mathrm{CARA}}
\def\crra{\mathrm{CRRA}}

\def\scara{\mathrm{cara}}
\def\scrra{\mathrm{crra}}
\def\shara{\mathrm{hara}}

\def\cA{\mathcal{A}}

\def\cU{\mathcal{U}}
\def\cL{\mathcal{L}}

\def\cF{\mathcal{F}}

\def\cU{\mathcal{U}}

\def\sm{C^1_+}

\ifdefined\DRAFT

\definecolor{ForestGreen}{rgb}{.13,.54,.13}
\definecolor{violet}{cmyk}{0.79,0.88,0,0}
\definecolor{darkmagenta}{rgb}{0.55, 0.0, 0.55}

\definecolor{darkBrickRed}{rgb}{.70,.13,.16}
\newcommand{\po}[1]{{\color{red}{(\textbf{Po:} #1)}}}
\newcommand{\ben}[1]{{\color{red}{(\textbf{Ben:} #1)}}}

\else
\newcommand{\fed}[1]{}

\newcommand{\omer}[1]{}
\newcommand{\benpo}[1]{}
\newcommand{\ben}[1]{}
\newcommand{\po}[1]{}
\fi

\definecolor{redPart}{rgb}{1.0, 0.0, 0.0}
\definecolor{greenPart}{rgb}{0.0, 1, 0.0}
\definecolor{bluePart}{rgb}{0.0, 0.0, 1.0}
\definecolor{yellowPart}{rgb}{1, 1, 0}
\definecolor{magentaPart}{rgb}{1, 0, 1}
\definecolor{cyanPart}{rgb}{0.0, 1, 1}

\DeclareMathOperator*{\argmax}{arg\!max}

\title{Risk Aversion Reversals\footnote{
We are grateful to Omer Tamuz, Charles Sprenger, Luciano Pomatto, Kirby Nielsen, and Ted O'Donoghue for their guidance and support, as well as Jose Apesteguia, Miguel Ballester, and Tomasz Strzalecki for their valuable comments. We also thank Marina Agranov, Aislinn Bohren, Peter Caradonna, Federico Echenique, Michael Gibilisco, Philip Haile, Jay Lu, Aldo Lucia, Axel Niemeyer, Tom Palfrey, Kota Saito, Christopher Turansick, Jeffrey Yang, Leeat Yariv, and the audiences at the ESA conference, LAX workshop, LA Theory conference, and DTEA conference for their helpful comments and suggestions. 
}}

\author{Po Hyun Sung\thanks{Caltech. Email: psung@caltech.edu}\\
Ben Wincelberg\thanks{Caltech. Email: bwincelb@caltech.edu}}

\date{\today\\
}
\begin{document}
\maketitle
\begin{abstract}
Standard stochastic choice models used to estimate risk aversion can lead to risk-aversion reversals, where a more risk-averse individual chooses a riskier lottery more frequently than a less risk-averse individual. We study when reversals are implied by the preference specification rather than the noise specification. We say that two utilities imply reversals in a given noise framework if reversals arise for every specification of noise for each individual. For weak utility, a flexible class that includes logit and probit and allows for menu-dependent noise, two utilities imply reversals if and only if their curvature ratio is unbounded. This condition holds for CARA, CRRA, and their generalizations, for which reversals arise for empirically relevant coefficients and lotteries, raising concerns about resulting estimates and out-of-sample predictions. Finally, we show that equicautious HARA, sum-ex, and sum-power utilities do not imply reversals and that, moreover, these families remain well-behaved for multinomial choice.

\end{abstract}
\section{Introduction}
A precise understanding of how humans evaluate risk is central to economics. 
Since choices are inherently noisy, analysts deploy stochastic choice models to recover individuals' risk preferences and noise from choice data. The most widely used stochastic choice models for this purpose are \textit{Fechnerian models}, a non-parametric class that includes the logit and probit models. In practice, the analyst chooses a functional form for utility and a functional form for noise which maps utility differences to choice probabilities \cite*[e.g.,][]{harrison2007naturally,von2011heterogeneity,holzmeister2021risk}.

However, predictions from these models under CARA or CRRA---the most commonly used utility functions---suffer from a well-known problem. When two individuals differ in their risk-aversion parameters but face the same noise, in many choices reversals occur: the more risk-averse individual exhibits more risk-seeking behavior \citep{wilcox2008stochastic,wilcox2011stochastically,blavatskyy2011probabilistic,apesteguia2018monotone}.  In response to these critiques, subsequent work has argued that non-monotonicity is particularly worrisome under homoskedastic noise and has advocated allowing heteroskedasticity to better calibrate noise to utility \citep*{barseghyan2018estimating,keffert2024stochastic}.  Moreover, it is common to model heterogeneity in both preferences and noise and to estimate them jointly \cite*[e.g.,][]{hey1994investigating,von2011heterogeneity,meissner2023individual}. 

In this paper, we
show that the problems noted for Fechnerian models are far more pervasive than previously recognized: they persist even when individuals differ arbitrarily in their Fechnerian noise, when noise varies across menus, and when noise takes a more general, non-Fechnerian form. 
That is, under CARA and CRRA utilities, none of these generalizations can eliminate the counterintuitive predictions. We then characterize which utility functions, beyond CARA and CRRA, give rise to risk-aversion reversals and advocate for parametric utility families that do not, under various noise specifications, exhibit reversals: a more risk-averse individual always chooses safer options more frequently. Beyond its intuitive appeal, this comparative static aligns with empirical findings \citep{bruner2017does}. Finally, we discuss a generalization of Fechnerian models to multinomial choice and show that our proposed utility families do not exhibit reversals even in this more general setting.

\medskip

To build intuition, we begin with the baseline Fechnerian framework, in which each individual is associated with a Bernoulli utility function $u$ and a strictly increasing function $F$. We refer to $F$ as a Fechnerian noise structure, where
\[
F(U(X)-U(Y))
\] is the probability that $X$ is chosen over $Y$, and $U(X)=\E{u(X)}$. The 
most popular Fechnerian noise structures are CDFs of normal and logistic distributions, corresponding to probit and logit models. Indeed, any additive random utility model with i.i.d.\ shocks fits the Fechnerian framework.\footnote{Technically, for $F$ to be strictly increasing, the difference in these shocks must have full support.}

Our first result is that, for some pairwise choices between a safe and a risky lottery,\footnote{We say that lottery $X$ is safer than lottery $Y$ if $X$ dominates $Y$ in the concave order. Equivalently, every expected utility maximizer with a concave utility function must prefer $X$ to $Y$.}  a more risk-averse individual chooses the risky option more often than a less risk-averse individual, even when the two differ arbitrarily in their Fechnerian noise structures (\Cref{th:main}). Thus, the problems identified for the homoskedastic noise model cannot be resolved by jointly estimating risk and noise, even when noise is estimated non-parametrically. 

For example, consider an analyst who observes Anne and Bob choosing from three distinct menus---each menu consists of a safe lottery---a sure payoff of \$8, \$10, or \$14---and a 50–50 risky lottery paying plus or minus \$4, \$6, or \$8 from the sure amount. The choice probabilities for the safe lotteries are shown in the second and third columns of \Cref{tab:crra_reversal_scale}. Assuming CRRA expected utilities for their risk preferences and normal distributions for their noise structures, Anne is estimated to be more risk-averse and experience less noise than Bob. Indeed, the choice probabilities in \Cref{tab:crra_reversal_scale} correspond to Anne having a CRRA coefficient of $0.8$ and noise variance of $0.5$, compared to Bob’s $0.3$ and $1$, respectively.
\footnote{I.e., the analyst deploys the probit model for Anne and Bob \begin{equation*}
F_{A}(U_A(X)-U_A(Y)) \quad \text{and}\quad  F_{B}(U_B(X)-U_B(Y)),
\end{equation*} 
 respectively, where $U_A$ and $U_B$ are CRRA expected utility functions with relative risk aversions $a$ and $b$, and  $F_{A}$ and $F_{B}$ are the CDFs of normal distributions with zero means and respective variances $\sigma_{A}^2$ and $\sigma_{B}^2$.}

        \begin{table}[H]
\centering
\begin{tabular*}{\textwidth}{@{\extracolsep{\fill}} l cccccc}
\toprule
$S$ vs $R$ &$\rho_{A}(S,R)$ & & $\rho_{B}(S,R)$ & $\rho_{A}(3S,3R)$ &  &$\rho_{B}(3S,3R)$\\
\midrule
$8$ vs $4, 12$ & 0.60 & > & 0.57 & 0.62 & < & 0.65 \\
$10$ vs $4, 16$ & 0.65 & > & 0.62  & 0.69 & < & 0.74\\
$14$ vs $6, 22$ & 0.64 & > & 0.63  & 0.68 & < & 0.77 \\
\bottomrule
\end{tabular*}
\caption{CRRA Choice Probabilities for $S$ and $3S$ with normal errors. }
\label{tab:crra_reversal_scale}
\end{table}

 In the last two columns of \Cref{tab:crra_reversal_scale}, we report the choice probabilities of safe options whose outcomes are tripled from the base lotteries under the estimated parameters. Note that when the stakes are increased, Bob chooses safer options more frequently than Anne from all menus. 
 Thus, despite Anne being estimated as more risk-averse and more precise, the model predicts that she will behave more risk-seeking than Bob once the stakes are scaled up. 
 
This reversal pattern does not hinge on these specific CRRA risk coefficients and noise structures, nor on the specific choice of lotteries. Indeed, given lotteries $X$ and $Y$ with $X$ safer than $Y$, we show how to construct lotteries $X'$ and $Y'$ offering higher potential rewards, such that $X'$ remains safer than $Y'$ yet leads to a reversal. 
For CARA utilities, reversals occur after shifting the lotteries by the same sufficiently large outcome (\Cref{prop:stakes}). Thus, CARA and CRRA Fechnerian models with arbitrary noise structures make the perverse and systematic prediction that those most inclined toward choosing safer options when stakes are small must become the least inclined when stakes are larger.

Assuming that this reversal pattern does not always bear out in reality---and indeed  \cite{bruner2017does} finds that it does not---our results indicate that the estimates yielded by these models will be sensitive to the level of stakes that are analyzed, making them unreliable.\footnote{This echoes the critique of expected utility in \cite{rabin2000risk}, which also reveals paradoxes when stakes are increased. While Rabin's result is about the curvature of utility functions in deterministic models, our result is about the interaction between noise structures and CARA/CRRA utilities.} Indeed, if the analyst instead observed the choice probabilities of the scaled lotteries in \Cref{tab:crra_reversal_scale}, the analyst would estimate that Bob is more risk-tolerant than Anne, even if he exhibited more risk-averse behavior. 
As illustrated in that table, the stake increases are often modest, indicating that such reversals can emerge even in low-stakes experimental settings. Moreover, these models give counterintuitive, and perhaps counterfactual, out-of-sample predictions for choices made at larger or smaller stakes.

\medskip
We next show that the paradoxical reversals of CARA and CRRA Fechnerian models are not artifacts of overly restrictive Fechnerian noise structures. To establish this, we explore \emph{weak expected utility} (WEU) models. A WEU model consists of a vNM utility function $U$, a Fechnerian noise structure $F$ and a menu-dependent scale parameter $\sigma \colon \cL \times \cL \to (0,\infty)$ such that the probability of choosing $X$ over $Y$ is given by \[F\left(\frac{U(X)-U(Y)}{\sigma(X,Y)}\right).\]

WEU models generalize the standard Fechnerian framework by allowing choice probabilities to depend on factors beyond strict utility differences. An important subclass of WEU models arises in the additive random utility framework, when the random shocks to utility that are associated with each lottery are independent but not identically distributed. Economically, this accommodates context-dependent noise; for example, large-stakes lotteries may induce increased attention, resulting in less variable shocks.

Without any restrictions on how noise varies across menus, the only implication of WEU models is that individuals will choose their preferred lottery more than half the time. Thus, to derive meaningful comparisons between individuals with different risk attitudes, we must impose additional structure. Our next results (\cref{th:finalboss} and \cref{th:menu_dep}) show that, under a mild continuity requirement on the menu-dependent scale parameters, WEU models based on CARA or CRRA utilities are still guaranteed to produce reversals. Thus, even when allowing for arbitrary noise that varies across both individuals and menus, the CARA and CRRA expected utility functional forms lead to untenable predictions.

 One might suspect, at this point, that these reversals arise because $F$ and $\sigma$ are applied to vNM utilities, which are just representations of preferences and carry no cardinal significance. A natural alternative would be to apply $F$ and $\sigma$ to certainty equivalents $u^{-1}(\E{u(X)})$, or more generally to $f(\E{u(X)})$ for some strictly increasing transformation $f$. However, \Cref{cor:transform} establishes that this transformation does not resolve the problem. Ultimately, CARA and CRRA utilities prove to be fundamentally incompatible with stochastic choice consistency.

\medskip

We next turn our attention toward the characterization of families of Bernoulli utilities that interact well with noise. Our main result is that when $u_A$ is more risk-averse than $u_B$, there exist reversals for every continuous weak utility model if and only if the ratio
\[u_B''(x)/u_A''(x)\] is unbounded (\Cref{th:finalboss}). 
Indeed, for CARA and CRRA utilities, this condition holds, leading to reversals in every model of noise we have considered. Moreover, this condition is met by a class of utility functions that generalizes CARA and CRRA (\cref{cor:expo}).

We thus advocate for using parametric utility functions that do not carry the implication of reversals. These include families generated by mixtures of two CARA or CRRA utility functions as well as the class of equicautious HARA utilities (see \cref{sec:mono}). These simple families of utility functions offer well-behaved alternatives to CARA and CRRA for empirical analysis.

We conclude by studying reversals in the more general multinomial-choice setting. 
We focus on the \emph{additive perturbed utility} (APU) model of \cite*{fudenberg2015stochastic}, a multinomial extension of the Fechnerian model. Our main result is that the utility families that avoid reversals in the binary setting also avoid reversals in the multinomial setting (\cref{th:apu_rel}).

\subsection{Related Literature}
We contribute to a large body of literature on risky choices in the presence of noise \citep*[e.g.,][]{becker1963stochastic,harless1994predictive,hey1994investigating,loomes2002microeconometric,blavatskyy2007stochastic}. See \cite{wilcox2021utility} for a recent survey. In particular, we study Fechnerian noise structures and their generalizations. Axiomatic investigations of these models have been undertaken by \cite{debreu1958stochastic},  \cite{tversky1969substitutability}, \cite{he2024moderate}, and \cite*{fudenberg2015stochastic}, among others. See \cite{Strzalecki_2025} for an extensive review.

Our main result concerns WEU models, which recognize that some comparisons may involve greater cognitive effort or uncertainty, allowing noise to depend on the menu. \cite{hey1995experimental}, \cite{buschena2000generalized}, and \cite{loomes2005modelling} study WEU models where noise depends on, for example, value differences between lotteries and question difficulty. More recently, \cite{he2024moderate} and \cite{shubatt2024tradeoffs} characterize noise that depends on a measure of distance between alternatives.

\medskip
One of the first papers to study comparative risk aversion in stochastic choice is \cite{blavatskyy2011probabilistic}, who examines when one individual is always more likely to choose a sure thing over a lottery than another individual. His main result is that this notion of stochastic comparative risk aversion is too strong to be compatible with any Fechnerian expected utility models.

The papers most closely related to ours are \cite{wilcox2011stochastically} and \cite{apesteguia2018monotone}, which demonstrate that CARA and CRRA utilities coupled with heterogeneous Fechnerian noise are problematic since they lead to reversals. \cite{wilcox2011stochastically} shows that other parameterizations of CARA and CRRA utilities also lead to reversals. This corresponds to two individuals having noise structures that are related by a scale factor. We study this special case in \Cref{rem:scale}. 

\cite{apesteguia2018monotone} study a larger order that includes the concave order: they rank one lottery as safer than another if increasing the underlying risk-aversion parameter can only shift preferences in its favor. Thus, this definition depends on the underlying family of utilities, unlike the concave order. \cite{apesteguia2018monotone} establish that when the utility difference for a pair of ordered lotteries is non-monotone in the risk parameter, there is a pair of risk-aversion parameters where the higher parameter chooses the riskier option more frequently than the lower parameter in homoskedastic Fechnerian models. They show that for CARA and CRRA, the utility difference is non-monotone for every pair of ordered lotteries.

Our work builds on the insights of \cite{wilcox2011stochastically} and \cite{apesteguia2018monotone} in several ways. First, we allow for heterogeneous noise structures and show that the paradox of reversals persists without any parametric assumptions on how noise differs across individuals (\Cref{th:main}). Second, we show that the reversals occur under WEU models, in which noise varies across menus (\Cref{th:menu_dep}). 
Finally, we characterize which families of utility functions, beyond CARA and CRRA, suffer from these reversals (\Cref{th:finalboss}) and propose alternative parametric families of utilities that do not suffer from the problem. This is in contrast with the solutions proposed by \cite{wilcox2011stochastically} and \cite{apesteguia2018monotone} who retain CARA and CRRA utilities.

\cite{wilcox2011stochastically} proposes the \emph{contextual utility} model---a special case of WEU---and shows that lotteries over three fixed outcomes do not generate reversals. Nevertheless, \cite{apesteguia2018monotone} show that this model is no longer monotone for more than three outcomes and propose instead the random parameter model, where each individual is associated with a distribution over CARA or CRRA preferences and chooses probabilistically as if they draw a random preference from their distribution. 

As \cite{apesteguia2018monotone} point out, random parameter models trivially do not give rise to reversals in binary choice that involve pairs that are stochastic dominance related, since they predict deterministic choices in this case, as every risk-averse individual prefers the safer option.\footnote{To avoid this, a second stage of randomness is often added in practice through a trembling parameter, see, e.g., \cite{loomes2002microeconometric,apesteguia2018monotone,keffert2024stochastic,jagelka2024economists}.} As these models are not equipped to study individual variation in stochastic choice behavior between safer and riskier alternatives, they are not a focus in our study of risk-aversion reversals.

\medskip

In the deterministic setting,
\cite*{kihlstrom1981risk} and \cite{ross1981some} note that the risk premium\footnote{Risk premium is defined as the amount subtracted from a less risky lottery to establish indifference with a riskier one in the concave order} is not necessarily increasing in the Arrow-Pratt order.  \cite{keffert2024stochastic} take this insight to the stochastic setting and argue that precluding reversals for concave-ordered lotteries is overly restrictive, leading them to propose an alternative notion of reversals. 

Whereas \cite{keffert2024stochastic} deal with the paradox of reversals by proposing a weaker requirement on stochastic choice behavior, an alternative is to restrict the set of underlying preferences. Indeed, to solve the deterministic paradox of risk-premia, \cite{ross1981some} proposes a stronger notion of comparative risk aversion than Arrow-Pratt that successfully restores monotonicity. We bring this preference-based approach into the stochastic domain. As our characterization result (\Cref{th:finalboss}) shows, Ross' stronger notion implies our condition, demonstrating that it is sufficient to eliminate choice reversals under certain noise structures.

We also contribute to a literature examining failures of expected utility theory in predicting choices across varying ranges of stakes. The most prominent critique in this area is by \cite{rabin2000risk}, who demonstrates that within the expected utility framework, even modest levels of risk aversion exhibited over small-stakes, imply absurd, counterfactual levels of risk aversion for large stakes. While this result holds for deterministic choice, \cref{rem:scale,prop:stakes} echo \cite{rabin2000risk} in the stochastic setting: out of sample predictions made from small-stakes observations are counterintuitive.
\section{Preliminaries}

 We denote by $\mathcal L$ the set of all bounded real random variables defined over a non-atomic probability space $(\Omega,\Sigma,\mathbb{P})$. We use the term \textit{lotteries} to refer to elements of $\cL$. A stochastic choice rule is a map $\rho\colon \cL \times \cL \to \R$  where $\rho(X,Y)+\rho(Y,X)=1$ and $\rho(X,Y)$ is the probability of choosing $X$ over $Y$. A vNM \emph{utility function} is a map $U\colon \cL \to \R$ given by $U(X) = \E{u(X)}$ for some strictly increasing, concave, and continuous Bernoulli utility  $u \colon \R \to \R$. 

We consider an individual that evaluates lotteries according to a vNM utility $U$ but faces noise and chooses stochastically according to $\rho$. We think of $\rho$ as a noisy expression of the underlying utility $U$. 
For example, an individual with utility $U$ whose noise takes a logit form has choice probabilities 
\[\rho(X,Y)=\frac{\ee^{\beta U(X)}}{\ee^{\beta U(X)}+\ee^{\beta U(Y)}},\] for some $\beta > 0$. When $\beta$ is held constant across all menus, $\rho$ belongs to the class of Fechnerian models, where choice probabilities are a function of utility differences. We discuss Fechnerian models in depth in \cref{sec:main}. More generally, the logit scale parameter $\beta$ may depend on $(X,Y)$. In this case, there is much more flexibility. Indeed, the only restrictions of this formulation are that choice probabilities are in $(0,1)$ and that $\rho(X,Y)\ge \frac{1}{2}\iff U(X)\ge U(Y)$, since $\beta$ is only required to be positive for each menu. 
We study these general models in \cref{sec:menu_dep}.

We say that $X$ dominates $Y$ in the \emph{concave order}, denoted $X\ccv Y$ if $\E{g(X)}\ge \E{g(Y)}$ for all concave functions $g \colon \R \to \R$. We denote the strict part of $\ccv$ by $\sccv$. Recall that $X \ccv Y$ if and only if $Y$ is a mean-preserving spread of $X$. If $X \ccv Y$, then $U(X) \ge U(Y)$ for all vNM utilities $U$.\footnote{Since we defined vNM utilities as expectations of concave Bernoulli utilities, they are increasing in $\ccv$.} We will often use the variables $S$ and $R$ when referring to lotteries where $S\sccv R$, in order to highlight that $S$ is \emph{safer} while $R$ is \emph{riskier}. While we focus on lotteries that can be ranked in the concave order, analogous results hold throughout for lotteries comparable by the increasing concave order; see \cref{sec:notions}. Finally, we denote by $C^k_+$ the set of $k$-times continuously differentiable functions defined on an open interval of $\R$ with strictly positive $k$th derivatives.

We are interested in comparing the choices of individual $A$ with those of $B$. Each individual $i$ is associated with a utility function $U_i$ and stochastic choice model $\rho_i$. We will often refer to agents and their associated stochastic choice models interchangeably. We say $A$ is \emph{weakly more risk-averse than} $B$ if  $U_A(X) \geq U_A(y)$ implies $U_B(X) \geq U_B(y)$ for any $X \in \cL$ and degenerate lottery $y$.\footnote{This is \cite{yaari1969remarks}'s notion of comparative risk, which coincides with the Arrow-Pratt notion of comparative risk since $U_A$ and $U_B$ are vNM utility functions. I.e., $U_A$ being more risk-averse than $U_B$ is equivalent to the concavity of $u_A\circ {u_B}^{-1}$, when $u_A$ and $u_B$ are the corresponding Bernoulli utility functions to $U_A$ and $U_B$. Indeed, CARA and CRRA utilities are totally ordered by this order.} We say $A$ is \emph{more risk-averse than} $B$ if $A$ is weakly more risk-averse than $B$ and $B$ is not weakly more risk-averse than $A$.

\begin{definition}\label{def:paradoxical}
    When $A$ is more risk-averse than $B$, we say that there is a \emph{reversal} if there are lotteries $S \sccv R$, where
 \begin{align*}
     \rho_A(S,R) < \rho_B(S,R).
 \end{align*} Moreover, we say that $S$ and $R$ \emph{generate a reversal}.
\end{definition} 

This means that $A$, the more risk-averse individual, exhibits more risk-tolerant behavior by choosing the safer lottery $S$ less frequently than $B$. $A$'s behavior can also be seen as more risk-tolerant in the following sense: Let $s_A$ and $s_B$ denote $A$'s and $B$'s respective choice probabilities of $S$. The ex-ante lottery that $A$ receives is the compound lottery resulting in $S$ with probability $s_A$ and in $R$ with probability $1-s_A$, and likewise for $B$. Since $s_A < s_B$, $A$'s compound lottery is riskier than $B$'s, so $A$ evidently makes riskier choices than $B$. This interpretation naturally extends to the multinomial setting; see \cref{sec:multi}.

\medskip
Of course, we might expect that despite $A$ being more risk-averse, she still chooses $S$ less frequently than $B$ due to heterogeneous noise. For example, if $A$ experiences much more noise than $B$, we do not want to predict $A$ to choose safer options more frequently than $B$ for all comparisons. Indeed, we do not want to preclude this possibility; rather, we want to ensure that reversals are not a built-in assumption of the utility specifications themselves. 

When $A$ is more risk-averse than $B$, we say that their two utilities $U_A$ and $U_B$ \emph{imply reversals} in a given noise framework if, for every pair of admissible noise specifications for  $A$ and $B$, the resulting $\rho_A$ and $\rho_B$ yield reversals. Thus, if $U_A,U_B$ imply reversals, it is impossible to accommodate choice behavior in which the more risk-averse individual $A$ chooses the safer option more frequently than $B$ even with arbitrarily heterogeneous noise. 

For example, $U_A$ and $U_B$ imply reversals in the logit framework if for every pair of logit scale parameters $\beta_A$ and $\beta_B$, there exist $S \sccv R$ such that
\[\rho_A(S,R)=\frac{\ee^{\beta_A U_A(S)}}{\ee^{\beta_A U_A(S)}+\ee^{\beta_A U_A(R)}} < \frac{\ee^{\beta_B U_B(S)}}{\ee^{\beta_B U_B(S)}+\ee^{\beta_B U_B(R)}}=\rho_B(S,R).\]

\medskip
In the following sections, we show that Constant Absolute Risk Aversion (CARA) and Constant Relative Risk Aversion (CRRA)---the most commonly used parametric utility specifications in both theory and practice---imply reversals in many noise frameworks. CARA and CRRA utilities are defined by $\cara_a(X)=\E{\scara_a(X)}$ and $\crra_a(X)=\E{\scrra_a(X)}$, where $a> 0$ is the coefficient of absolute/relative risk aversion and
\begin{align*}
\scara_a(x)=
    \frac{1-\ee^{-ax}}{a}, \qquad \qquad 
\scrra_a(x)= \begin{cases}%\label{eq:CRRA}
    \frac{x^{1-a}-1}{1-a} & a\neq 1 \\
    \ln(x) & a=1.
\end{cases} 
\end{align*}
 The coefficients reflect varying levels of risk aversion, with higher coefficients corresponding to greater aversion to risk.\footnote{Note that there are many possible ways of parameterizing CARA and CRRA preferences, since applying a positive affine transformation to each Bernoulli utility does not change the underlying preference/coefficient. Nevertheless, we will show that our results hold for every possible parameterization of these families.} To ensure that $\crra_a(X)$ is well defined, we restrict our attention to $X$ that is non-negative and bounded away from $0$ whenever referring to CRRA.

Since we are interested in understanding when utility functions imply reversals in stochastic choice frameworks, 
we do not focus on models that make the prediction that every individual will choose the safer alternative deterministically. 
This includes random parameter models in which each individual's vNM utility is drawn randomly from a one-parameter family $\{U_\omega\}_{\omega\in \Omega}$ and $\rho(X,Y)=\mP_\omega(U_\omega(X)\ge U_\omega(Y))$, where $\mP_\omega \in \Delta(\Omega)$. See \cref{app:rpm_reversals} for further discussion about random parameter models.

\section{Fechnerian Models}\label{sec:main}
We first consider models of the form 
\begin{align}\label{eq:Fechner}
  \rho(X,Y)=F(U(X)-U(Y)),  
\end{align}
where $F \colon \R \to [0,1]$ is a $C^1_+$ function satisfying $F(t)+F(-t)=1$.
 We denote by $\cF$ the set of all such functions and refer to a model of the form $(U,F)$, where $U$ is a vNM utility and $F \in \cF$, as a \emph{Fechnerian expected utility} (FEU) model.\footnote{The assumptions that $F$ must be continuously differentiable with positive derivative can be relaxed without affecting our results, provided these assumptions hold at $0$. Note that we do not assume that $\lim_{t \to \infty}F(t)=1$, nor that $\lim_{t \to -\infty}F(t)=0$, although this will be the case for the examples we study. 
 } In these models, the larger the difference between $U(X)$ and $U(Y)$, the higher the probability that $X$ is chosen.

FEU models are among the most widely used stochastic choice models \cite*[see, e.g.,][]{becker1963stochastic,loomes2002microeconometric}. An important subclass of FEU models yields choice probabilities
\begin{align}\label{eq:iid}
     \mP(U(X)+\varepsilon \geq U(Y) + \varepsilon')=F(U(X)-U(Y)), 
\end{align}
where $\varepsilon$ and $\varepsilon'$ are i.i.d.\ continuous random variables and $F$ is the CDF of $\varepsilon-\varepsilon'$. This subclass includes many widely used discrete choice models such as the logit and probit models, where the random shocks are Gumbel and Gaussian, respectively. 

 In this section, we focus on FEU models and show that the CARA and CRRA specifications display a paradox: they predict that a more risk-averse individual will sometimes choose a riskier asset more frequently than a more risk-tolerant individual under any Fechnerian noise specifications.

\begin{proposition}\label{th:main}
Let $U_A$ and $U_B$ be distinct $\cara$ $(\crra)$ utilities. Then $U_A$ and $U_B$ imply reversals in the FEU framework. 
\end{proposition}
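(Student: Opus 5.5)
Fix arbitrary $F_A,F_B\in\cF$. Reversal means $F_A(U_A(S)-U_A(R))<F_B(U_B(S)-U_B(R))$. The plan is to drive both utility differences to zero, so that only the slopes at $0$ matter, and then compare those differences. Since $F_i(0)=\tfrac12$ and $F_i'(0)>0$, a first-order expansion gives
\[
F_i(\Delta_i)=\tfrac12+F_i'(0)\Delta_i+o(\Delta_i).
\]
So it suffices to find a family of pairs $S_n\sccv R_n$ with $\Delta_i^n=U_i(S_n)-U_i(R_n)\to 0$ for $i=A,B$ and
\[
\Delta_A^n/\Delta_B^n\to 0 .
\]
Then eventually $F_A(\Delta_A^n)-\tfrac12<F_B(\Delta_B^n)-\tfrac12$, whatever the positive constants $F_A'(0),F_B'(0)$ are. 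All $\Delta$'s are positive because $S_n\sccv R_n$ and the utilities are strictly concave, hence strictly monotone in $\sccv$.

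For CARA with $a>b$, I will shift a fixed pair: $S_n=S+n$ and $R_n=R+n$ for fixed $S\sccv R$. For a CARA utility, $\E{\scara_a(X+n)}=\tfrac1a-\ee^{-an}\tfrac1a\E{\ee^{-aX}}$, so
\[
\Delta_A^n=\ee^{-an}c_A, \qquad \Delta_B^n=\ee^{-bn}c_B,
\]
with $c_A,c_B>0$ constants. The ratio is $\ee^{-(a-b)n}(c_A/c_B)\to 0$, and both differences tend to $0$. Any other parameterization is a positive affine transformation, which multiplies $\Delta_i$ by a positive constant. The argument is therefore unaffected; this is why I argue via ratios rather than fixed normalizations.

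For CRRA with $a>b$, I will scale instead: $S_n=nS$ and $R_n=nR$, with $S$ and $R$ bounded away from $0$, and $n\to\infty$. For $a\neq 1$,
\[
\E{\scrra_a(nX)}=\frac{n^{1-a}\E{X^{1-a}}-1}{1-a},
\]
so $\Delta_A^n=n^{1-a}c_A$ with $c_A>0$ by strict concavity. For $a=1$, $\Delta_A^n=\E{\ln S}-\E{\ln R}$ is constant. Hence the ratio behaves like $n^{b-a}\to 0$. The remaining requirement is $\Delta_i^n\to0$, which needs $a,b>1$; otherwise $\Delta_B^n$ grows or stays constant.

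The main obstacle is therefore the regime $b\le 1$, where I cannot send both differences to zero. I would instead scale down, $n\to 0^+$. Then $\Delta_A^n=n^{1-a}c_A$ with $a>b$. If $a<1$, both tend to $0$, but now the ratio $n^{b-a}\to\infty$, which is the wrong direction. The fix is to aim for the opposite asymptotic: keep $\Delta_A^n\to 0$ while $\Delta_B^n$ stays bounded away from $0$ or blows up. Once $\Delta_A^n\to0$ we get $F_A(\Delta_A^n)\to\tfrac12$, while $F_B(\Delta_B^n)\ge F_B(\delta)>\tfrac12$ for some fixed $\delta>0$ by monotonicity, which is enough.

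So in general I will choose the direction of scaling so that the more risk-averse $A$'s difference vanishes while $B$'s does not vanish as fast. Scaling up works whenever $a>1$: then $\Delta_A^n\to0$, and $\Delta_B^n$ either goes to $0$ more slowly (if $b>1$) or stays bounded below (if $b\le1$). When $a\le 1$ (so $b<1$), the utility differences scale like $n^{1-a}$ and $n^{1-b}$. To handle this case, I would modify the lotteries rather than scale: take $S$ degenerate at $x$ and $R$ a small symmetric spread of size $\varepsilon$ around $x$. Then $\Delta_i\approx\tfrac12\varepsilon^2\,\lvert u_i''(x)\rvert\,\mathrm{Var}$ with
\[
\lvert\scrra_a''(x)\rvert=a\,x^{-1-a}.
\]
The ratio of $A$'s difference to $B$'s is then $(a/b)\,x^{b-a}\to0$ as $x\to\infty$. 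I will pick $\varepsilon=\varepsilon(x)$ small enough that both differences vanish and the second-order approximation is uniform. This local-curvature construction (degenerate $S$, tiny spread, position chosen where the curvature ratio $u_B''/u_A''$ blows up) actually handles both CARA and CRRA uniformly, and is the version I would write. The delicate step is controlling the Taylor remainder uniformly in $x$ so that the ratio limit is justified.
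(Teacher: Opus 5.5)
Your proof is correct, but its key step differs from the paper's. Both arguments finish the same way: once $\Delta_A,\Delta_B\to0$ with $\Delta_B/\Delta_A$ unbounded, a first-order expansion of $F_A,F_B$ at $0$ gives the reversal (this is \cref{lem:par_seq}).

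The paper reaches that point in two separate steps. First, \cref{lem:infty_ratio} shows the ratio is unbounded, by shifting for CARA and scaling for CRRA, with no attempt to make the differences small. Then \cref{lem:ratio_implies} shrinks both differences while keeping the ratio \emph{exactly} fixed: it replaces $R$ with the compound lottery $R_\lambda S$ and lets $\lambda\to0$, using only linearity of expected utility.

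You never mix. For CARA, the shift itself drives both differences to zero; this is the paper's \cref{prop:stakes}, part 1. In your uniform version, you shrink the spread around a degenerate $S=x$ and read the ratio off $u_A''(x)/u_B''(x)$.

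What each route buys:
\begin{itemize}
\item Your route makes it transparent that unboundedness of $u_B''/u_A''$ is what drives reversals, anticipating the $\rel$ condition of \cref{th:finalboss}. The cost is that it needs $C^2$ utilities and a limiting argument.
\item The paper's mixing trick is exact and needs no differentiability of $u$. It is what the paper reuses for the general characterization and for continuous WEU, where mixing toward a fixed pair $(Z,Z')$ keeps $\sigma$ convergent.
\end{itemize}

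The ``delicate'' uniform control of the Taylor remainder that you flag is not actually needed. For each fixed $x$, as $\varepsilon\to0$ you have $\Delta_A(x,\varepsilon)/\Delta_B(x,\varepsilon)\to u_A''(x)/u_B''(x)$ and $\Delta_i(x,\varepsilon)\to0$. So you can choose $\varepsilon(x)$ after $x$ such that $\Delta_A/\Delta_B\le 2\,u_A''(x)/u_B''(x)$ and $\Delta_A,\Delta_B<1/x$, and then let $x\to\infty$.

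Your intermediate CRRA argument for $a>1\ge b$ is also valid on its own. There $\Delta_A\to0$ while $\Delta_B$ stays bounded below, so it uses only continuity of $F_A$ at $0$ and strict monotonicity of $F_B$.
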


Since CARA (CRRA) preferences are totally ordered by risk-aversion, without loss of generality suppose $A$ is more risk-averse. 
\cref{th:main} says that for any $F_A,F_B \in \cF$, there nevertheless exist lotteries $S\sccv R$ such that $\rho_A(S,R)=F_A(U_A(S)-U_A(R))<F_B(U_B(S)-U_B(R))=\rho_B(S,R)$.

\cref{th:main} is an impossibility result for the commonly used CARA/CRRA-expected-utility Fechnerian models since reversals arise under arbitrary pairs of noise structures and risk coefficients. In the case of $F_A=F_B$, i.e., when individuals face the same noise structures, it was shown by \cite{wilcox2011stochastically} and \cite{apesteguia2018monotone} that the probability of choosing the safer option from a pair is non-monotone in the CARA/CRRA coefficients.

In practice, however, it is common to model heterogeneity in both preferences and noise and to estimate them jointly \citep{hey1994investigating,von2011heterogeneity}. Indeed, subsequent work has argued that non-monotonicity is particularly worrisome under homoskedastic noise and has advocated allowing heteroskedasticity for calibration of each individual's noise to their utility scale \citep{barseghyan2018estimating,keffert2024stochastic}.

Yet, perhaps surprisingly, \Cref{th:main} demonstrates that non-monotonicity persists even when noise structures vary arbitrarily across individuals. Thus, no matter how we jointly estimate risk coefficients and noise structures for two individuals, the resulting estimated models $(\cara_a,F_A)$  and $(\cara_b,F_B)$ will predict reversals. Because reversals are inherent to these models, we must be cautious when using these estimates to rank individuals by risk aversion.

Note that choice probabilities are not invariant to reparameterizations of CARA and CRRA utilities. For example, consider the model $(U,F)$, where $U$ is a CARA or CRRA utility and $F \in \cF$. Let $V(X)=\frac{1}{c}U(X)+d$ for some $c > 0$. While $V$ represents the same preference as $U$, replacing $U$ with $V$ has the same effect on choice probabilities as changing the noise structure from $F$ to $G$ which is defined by $G(t)=F(t/c)$. However, since $G$ is also a member of $\cF$, it follows from \Cref{th:main} that reparameterizations of the CARA or CRRA utility family cannot resolve the non-monotonicity. 

\subsection{Scale-Family Heteroskedasticity}
To develop intuition for \Cref{th:main}, we first consider individuals with distinct CARA or CRRA utilities $U_A$ and $U_B$ with respective coefficients $a$ and $b$, where $a>b$ and whose noise structures are related by a scale factor. That is, $F_A(t) = F(t/\sigma_A)$ and $F_B(t) = F(t/\sigma_B)$ for some  $\sigma_A,\sigma_B > 0$ and common $F \in \cF$.\footnote{This is the case, for example, when $F_A$ and $F_B$ are both normal or both logistic CDFs.} The choice probabilities of $S$ over $R$
are then given by \begin{align*}\label{eq:safe_a}
\rho_A(S,R)=F\left(\frac{U_A(S)-U_A(R)}{\sigma_A}\right), \qquad \rho_B(S,R)=F\left(\frac{U_B(S)-U_B(R)}{\sigma_B}\right).%\label{eq:safe_b}
\end{align*} Since $F$ is strictly increasing, the pair $(S,R)$ generates a reversal (i.e., $\rho_B(S,R) > \rho_A(S,R)$) if and only if
\begin{equation}\label{eq:constant_ratio}
    \frac{U_B(S)-U_B(R)}{U_A(S)-U_A(R)} > \frac{\sigma_B}{\sigma_A}.
\end{equation} 

Note that $\frac{\sigma_B}{\sigma_A}$, which is the relative noise level, does not depend on the lotteries. Given this, the next lemma immediately implies that to generate a reversal, we can start with any $S \sccv R$, and get a reversal by adding to both a large enough constant $x$ (in the CARA case) or multiplying both by a large enough constant (in the CRRA case). 

\begin{lemma}\label{lem:infty_ratio}
For any $S\sccv R$, $a,b>0$, and all $x \in \R$ and $k>0$,
\begin{enumerate}
\item $\displaystyle \frac{\cara_b(S+x)-\cara_b(R+x)}{\cara_a(S+x)-\cara_a(R+x)} = C_1 \cdot \ee^{(a-b)x}$ 
\item $\displaystyle \frac{\crra_b(k \cdot S)-\crra_b(k \cdot R)}{\crra_a(k \cdot S)-\crra_a(k \cdot R)} = C_2 \cdot k^{a-b},$ 
\end{enumerate} 
for some positive constants $C_1$ and $C_2$.
\end{lemma}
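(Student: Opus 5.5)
The plan is a direct computation that exploits the translation (resp. homogeneity) structure of the two families. For the CARA case, I will write $\scara_a(y)=\frac{1-\ee^{-ay}}{a}$. Then for any lottery $X$,
\[
\cara_a(X+x)=\frac{1}{a}-\frac{\ee^{-ax}}{a}\,\E{\ee^{-aX}} .
\]
The constant $1/a$ cancels in differences, which gives
\[
\cara_a(S+x)-\cara_a(R+x)=\frac{\ee^{-ax}}{a}\Bigl(\E{\ee^{-aR}}-\E{\ee^{-aS}}\Bigr).
\]
The analogous formula holds for $b$. Taking the ratio, the $x$-dependence is exactly $\ee^{-bx}/\ee^{-ax}=\ee^{(a-b)x}$, and
\[
C_1=\frac{a}{b}\cdot\frac{\E{\ee^{-bR}}-\E{\ee^{-bS}}}{\E{\ee^{-aR}}-\E{\ee^{-aS}}}.
\]

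For CRRA with $a\neq1$, I will use $\scrra_a(ky)=k^{1-a}\frac{y^{1-a}}{1-a}-\frac{1}{1-a}$. Differences therefore give
\[
\crra_a(kS)-\crra_a(kR)=k^{1-a}\bigl(\crra_a(S)-\crra_a(R)\bigr).
\]
For $a=1$, $\ln(ky)=\ln k+\ln y$, so the difference is invariant, which matches $k^{1-a}=k^0$. Hence in all cases the ratio is $k^{1-b}/k^{1-a}=k^{a-b}$ times
\[
C_2=\frac{\crra_b(S)-\crra_b(R)}{\crra_a(S)-\crra_a(R)}.
\]

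The only substantive step is showing that $C_1$ and $C_2$ are well defined and strictly positive. This requires the denominators and numerators to be strictly positive, that is, $U(S)>U(R)$ for every CARA/CRRA utility whenever $S\sccv R$. Weak inequality is immediate from the concave order, since the Bernoulli utilities are concave. For strictness I would argue as follows. These Bernoulli utilities are strictly concave (on the positive reals for CRRA, where the lotteries live). Write $R$, via a mean-preserving spread representation, as $R\overset{d}{=}S+Z$ with $\E{Z\mid S}=0$ on a suitable probability space; this is Strassen's theorem, using non-atomicity. Then strict Jensen conditionally on $S$ gives $\E{u(S+Z)}<\E{u(S)}$ unless $Z=0$ almost surely. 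But $Z=0$ almost surely would mean $S$ and $R$ are equal in distribution, contradicting strictness of $\sccv$.

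I expect this strictness point to be the main care required. The rest is algebra, so I would state it as a short auxiliary observation: strictly concave $u$ yields strict preference along strict mean-preserving spreads. Alternatively, I could avoid Strassen by noting that equality $\E{u(S)}=\E{u(R)}$ for a strictly concave $u$, together with equal means and $S\ccv R$, forces equality of the integrated CDFs and hence of the distributions.
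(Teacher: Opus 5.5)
Your proposal is correct and is essentially the paper's proof: the same direct computation, factoring out $\ee^{-ax}/a$ (resp.\ $k^{1-a}$), with the same constants $C_1$ and $C_2$. You are in fact more careful than the paper. It handles $a=1$ or $b=1$ by a limit and justifies positivity of $C_1,C_2$ only by appeal to risk aversion, whereas you treat the logarithmic case directly and prove \emph{strict} positivity of the utility differences from strict concavity and the fact that $S$ and $R$ are not equal in distribution.
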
 

\cref{lem:infty_ratio} is proved in \cref{ap:th:main}. Note that when $a > b$, the ratios are strictly increasing in $x$ and $k$ and tend to infinity. Moreover, these properties hold for any $S\sccv R$. We thus have the following result which strengthens the existence of reversal pairs established in \cref{th:main} when noise structures are related by a scale factor.
\begin{proposition}\label{rem:scale}
Let $\sigma_A,\sigma_B > 0$ and $F \in \cF$. Let $F_A(t)=F(\frac{t}{\sigma_A})$, $F_B(t)=F(\frac{t}{\sigma_B})$,  and $S\sccv R$. Let $U_A,U_B$ be distinct CARA utilities and let $V_A,V_B$ be distinct CRRA utilities. 
Then there exist unique $x_0$ and $k_0$ such that 
\begin{enumerate}
    \item $S+x$ and $R+x$ generate a reversal under the Fechnerian models $(U_A,F_A)$ and $(U_B,F_B)$
    if and only if $x > x_0$;
    \item $k S$ and $k R$ generate a reversal under the Fechnerian models $(V_A,F_A)$ and $(V_B,F_B)$
    if and only if $k > k_0$. 
\end{enumerate} 

\end{proposition}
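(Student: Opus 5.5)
The plan is to reduce the reversal condition to the ratio inequality \eqref{eq:constant_ratio} and then apply \cref{lem:infty_ratio}. Without loss of generality let $A$ be the more risk-averse individual. For CARA this means the coefficients satisfy $a>b$, and similarly for CRRA. The reversal notion in the statement is understood with this labeling, so that the more risk-averse agent is $A$.

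\textbf{CARA case.} First note that $S+x \sccv R+x$ for every $x\in\R$. Indeed, adding a constant preserves mean-preserving spreads, so the shifted pair is still strictly ordered and is an admissible candidate for a reversal. Since $F$ is strictly increasing (it lies in $C^1_+$), the derivation preceding \eqref{eq:constant_ratio} applies to the shifted pair. It shows that $S+x,R+x$ generate a reversal if and only if
\[
\frac{U_B(S+x)-U_B(R+x)}{U_A(S+x)-U_A(R+x)} > \frac{\sigma_B}{\sigma_A}.
\]
Dividing by the denominator is legitimate because $U_A(S+x)-U_A(R+x)>0$. This positivity holds since $\scara_a$ is strictly concave and $S+x \sccv R+x$ is a strict mean-preserving spread.

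One point needs care about parameterization. $U_A$ and $U_B$ may be arbitrary positive affine reparameterizations of $\cara_a$ and $\cara_b$. Such a reparameterization multiplies the utility differences by positive constants, which only changes the constant $C_1$. By \cref{lem:infty_ratio}, the left-hand side therefore equals $C\,\ee^{(a-b)x}$ for some $C>0$.

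Because $a>b$, the function $x\mapsto C\ee^{(a-b)x}$ is a continuous strictly increasing bijection from $\R$ onto $(0,\infty)$. Hence there is a unique $x_0$ with $C\ee^{(a-b)x_0}=\sigma_B/\sigma_A$, namely
\[
x_0=\frac{\ln(\sigma_B/(C\sigma_A))}{a-b}.
\]
The inequality then holds exactly when $x>x_0$, and uniqueness follows from strict monotonicity.

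\textbf{CRRA case.} The argument is identical, with scaling in place of shifting. For $k>0$, the pair $kS,kR$ is still strictly concave-ordered: scaling preserves the concave order, since $g(k\cdot)$ is concave whenever $g$ is. The scaled pair is also still bounded away from $0$, so CRRA utilities remain well defined. By part 2 of \cref{lem:infty_ratio}, the ratio equals $C'k^{a-b}$. This is a strictly increasing bijection from $(0,\infty)$ onto $(0,\infty)$, which yields a unique $k_0>0$ such that a reversal occurs exactly when $k>k_0$.

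\textbf{Main obstacle.} The only real subtlety is justifying that the constants in \cref{lem:infty_ratio} are strictly positive and independent of $x$ (respectively $k$) for an arbitrary parameterization. Positivity is what guarantees the threshold is finite and unique rather than all or nothing. I would verify this directly. One has $\scara_c(s+x)=\ee^{-cx}\scara_c(s)+(1-\ee^{-cx})/c$, so utility differences scale by $\ee^{-cx}$. Similarly, CRRA differences scale by $k^{1-c}$, and when $c=1$ the $\ln k$ term cancels in the difference. The constants are then ratios of positive utility differences of the fixed pair $S,R$.
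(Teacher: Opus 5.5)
Your proposal is correct and follows the paper's argument. You reduce the reversal condition for the shifted or scaled pair to the ratio inequality \eqref{eq:constant_ratio}, then use \cref{lem:infty_ratio} to write the ratio as $C_1\ee^{(a-b)x}$ or $C_2 k^{a-b}$, which is a strictly increasing bijection onto $(0,\infty)$, so the thresholds are the unique solutions of $C_1\ee^{(a-b)x_0}=\sigma_B/\sigma_A$ and $C_2k_0^{a-b}=\sigma_B/\sigma_A$. Your extra checks are sound refinements of the same proof rather than a different route: strict positivity of the utility differences, preservation of $\sccv$ under shifting and scaling, invariance under positive affine reparameterization, and treating the unit-coefficient CRRA case directly instead of as a limit.
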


Note that the critical values $x_0$ and $k_0$ given in \cref{rem:scale} depend on the noise scale parameters $\sigma_A,\sigma_B$, risk coefficients, and the baseline lotteries. The values can be derived from \eqref{eq:constant_ratio} and the positive constants $C_1$ and $C_2$ given in  \cref{lem:infty_ratio}. I.e., critical values $x_0$ and $k_0$ solve for $C_1\cdot\ee^{(a-b)x}=\frac{\sigma_B}{\sigma_A}$ and $C_2\cdot k^{(a-b)}=\frac{\sigma_B}{\sigma_A}$.\footnote{Exact expressions for $C_1$ and $C_2$ are given in \eqref{eq:critical}. We obtain
\[x_0=\frac{1}{a-b}\ln\left(\frac{\sigma_B}{\sigma_A}\frac{b}{a}\left(\dfrac{\E{\ee^{-aR}}-\E{\ee^{-aS}}}{\E{\ee^{-bR}}-\E{\ee^{-bS}}}\right)\right),\quad  k_0=\left(\frac{\sigma_B}{\sigma_A}\frac{1-b}{1-a}\left(\frac{\E{S^{1-a}}-\E{R^{1-a}}}{\E{S^{1-b}}-\E{R^{1-b}}}\right)\right)^{\frac{1}{a-b}}.\]} For example, when the CRRA coefficients $a$ and $b$ are $0.8$ and $0.3$, respectively, and the normal noise variances are $\sigma^2_A=0.5$ and $\sigma^2_B=1$, the critical values that generate reversals for lotteries given in \cref{tab:crra_reversal_scale} are all less than $3$, illustrating that reversals arise under empirically relevant risk coefficients and lotteries.  

\Cref{rem:scale} highlights two problems with the CARA and CRRA Fechnerian models. First, there is a disconnect between the standard Arrow-Pratt notion of comparative risk-aversion developed in the deterministic framework with the resulting stochastic choice predictions. Namely, more risk-averse individuals are not more likely to make risk-averse choices. This counterintuitive prediction is contradicted by empirical evidence that risk aversion is negatively correlated with decision noise, as measured by the frequency with which individuals choose the concave-order dominated option \citep{bruner2017does}. In terms of estimation, this means that even if Bob chooses safer lotteries more frequently than Anne, we may conclude that Bob is more risk-tolerant than Anne.\footnote{This holds under any consistent estimator.} This is the case, for example, if we only observe the choice probabilities of the scaled lotteries in \Cref{tab:crra_reversal_scale}.

Second, these models make the systematic prediction that those most inclined to choose safer options when stakes are small must become the least inclined once the stakes grow modestly larger. Even without a specific model of risk aversion in mind, this prediction is implausible.

\subsection{Arbitrary Fechnerian Noise Structures}
 
 So far we have considered the special case that $F_A$ and $F_B$ are related by a scale factor. To prove the more general case considered in \Cref{th:main}, we establish the following lemma, which shows that the properties of CARA and CRRA utilities given in \Cref{lem:infty_ratio} give rise to non-monotonicity for any $F_A,F_B \in \cF$.

\begin{lemma}\label{lem:ratio_implies}
    Let $U_A$ and $U_B$ be vNM utility functions with $U_A$ more risk-averse. Suppose that for each $M\in \R$ there exist lotteries $S\sccv R$ such that 
    \begin{align}\label{eq:ratio}
         \frac{U_B(S)-U_B(R)}{U_A(S)-U_A(R)}\ge M.
    \end{align}
  Then $U_A,U_B$ imply reversals in the FEU framework.
\end{lemma}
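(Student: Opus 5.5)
The idea is to use the only local information we have about arbitrary $F_A,F_B\in\cF$, namely their positive derivatives at $0$. We shrink the utility differences toward $0$ without changing their ratio, and then let a ratio bound larger than $F_A'(0)/F_B'(0)$ settle the comparison. Fix arbitrary $F_A,F_B\in\cF$. Since $F(t)+F(-t)=1$ gives $F(0)=\tfrac12$, and $F\in C^1_+$, we have
\[
F_i(t)=\tfrac12+F_i'(0)\,t+o(t),\qquad F_i'(0)>0,\quad i=A,B .
\]
Set $M:=2F_A'(0)/F_B'(0)$. The hypothesis then gives lotteries $S\sccv R$ with $d_B:=U_B(S)-U_B(R)\ge M\,d_A$, where $d_A:=U_A(S)-U_A(R)\ge 0$.

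We first dispose of a degenerate case. If $d_A=0$, I would argue directly: the ratio being $\ge M$ is only meaningful with $d_B>0$ (alternatively, apply the hypothesis with a larger $M$ to get a pair with $d_A>0$). Then $\rho_A(S,R)=\tfrac12<F_B(d_B)=\rho_B(S,R)$, which is already a reversal. So assume $d_A>0$, and hence $d_B\ge M d_A>0$.

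The main step is a \emph{dilution} construction. For $p\in(0,1)$, use non-atomicity of $(\Omega,\Sigma,\mathbb P)$ to build lotteries $S_p,R_p$ as follows.
\begin{itemize}
\item Take an event $E$ with $\mathbb P(E)=p$.
\item On $E$, let $(S_p,R_p)$ have the (conditional) joint law of $(S,R)$.
\item Off $E$, set $S_p=R_p=c$ for a fixed constant $c$ (for CRRA-type domains, any $c>0$).
\end{itemize}
Expected utilities depend only on distributions, so $U_i(S_p)-U_i(R_p)=p\,d_i$ for $i=A,B$. Moreover $S_p\sccv R_p$: for any concave $g$, $\E{g(S_p)}-\E{g(R_p)}=p\bigl(\E{g(S)}-\E{g(R)}\bigr)$, which is $\ge0$, with strict inequality for some concave $g$ because the same holds for $S,R$. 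The ratio of utility differences is therefore unchanged, while both differences tend to $0$ as $p\to0$.

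Finally, compare choice probabilities as $p\to 0$:
\[
F_B(p d_B)-F_A(p d_A)=p\bigl(F_B'(0)d_B-F_A'(0)d_A\bigr)+o(p)\ \ge\ p\,F_A'(0)d_A+o(p).
\]
The inequality uses $F_B'(0)d_B\ge F_B'(0)Md_A=2F_A'(0)d_A$. The right-hand side is strictly positive for small $p$, since $F_A'(0)d_A>0$ and $d_A,d_B$ are fixed. So $\rho_A(S_p,R_p)<\rho_B(S_p,R_p)$, and $S_p,R_p$ generate a reversal. Since $F_A,F_B$ were arbitrary, $U_A,U_B$ imply reversals in the FEU framework.

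The main obstacle is the dilution step: we must check that $S_p,R_p$ exist as bounded random variables on the given space, remain strictly concave-ordered, and stay in the admissible domain (bounded away from $0$ for CRRA). Non-atomicity and the choice of a positive constant $c$ handle all three. The first-order expansion only needs differentiability of $F_A,F_B$ at $0$, consistent with the paper's remark that the regularity assumptions are needed only at $0$.
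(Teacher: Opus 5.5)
Your proof is correct and follows essentially the paper's route. You dilute the pair so that both utility differences shrink by a common factor while their ratio is preserved, and then compare $F_A$ and $F_B$ through their derivatives at $0$; the paper does the same, using explicit bounds $\varepsilon_A > F_A'(0)$, $\varepsilon_B < F_B'(0)$ and $M=\varepsilon_A/\varepsilon_B$, which is equivalent to your factor-of-2 little-$o$ argument. The differences are cosmetic: the paper dilutes by comparing $S$ with the mixture $R_\lambda S$ rather than mixing both lotteries with a common constant (your construction is what the paper uses later in the WEU case), and you explicitly treat the degenerate case $U_A(S)=U_A(R)$, which the paper leaves implicit.
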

\Cref{lem:ratio_implies} identifies a sufficient condition on the vNM utility functions to yield the negativity result in \Cref{th:main}. By \Cref{lem:infty_ratio}, CARA and CRRA utilities satisfy this condition, and thus \Cref{th:main} follows immediately from these lemmas. 

Let $S \sccv R$ satisfying \eqref{eq:ratio} for some $M \in \R$. The key step of the proof of \Cref{lem:ratio_implies} is to construct $S'\sccv R'$ such that 
\[
\frac{U_B(S)-U_B(R)}{U_A(S)-U_A(R)}=\frac{U_B(S')-U_B(R')}{U_A(S')-U_A(R')},
\] while making the utility differences arbitrarily small. To this end, we consider the menu $(S,R_\lambda S)$, where $\lambda \in (0,1)$ and $R_\lambda S$ is distributed as a compound lottery that yields $R$ with probability $\lambda$ and $S$ with probability $1-\lambda$. As $\lambda\to 0$, the utility difference between $S$ and $R_\lambda S$ vanishes. \Cref{lem:ratio_implies} exploits the linearity of vNM utilities, which ensures that the ratio of utility differences is independent of $\lambda$. It also relies on the differentiability of $F_A$ and $F_B$ to approximate them around $0$ by affine functions.
\Cref{lem:ratio_implies} is proved in \Cref{ap:th:main}.

Note that by \Cref{lem:infty_ratio}, \eqref{eq:ratio} is satisfied for all concave-ordered lotteries under sufficient shifting/scaling. Moreover, for CARA, the utility difference $U(S+x)-U(R+x)$ tends to $0$ as $x$ tends to infinity (see \cref{lm:diff_zero} in \cref{sec:prop_cara}). For CRRA, on the other hand, $U(kS)-U(k R)$ increases with $k$ for relative risk coefficients less than unity. We therefore apply an additional transformation of scaling down the probability of receiving the risky prospect so that the utility difference is arbitrarily small. Based on these observations, the following proposition extends the non-monotonicity result of \Cref{rem:scale} to all $F_A,F_B \in \cF$.

\begin{proposition}\label{prop:stakes}
Let $F_A,F_B \in \cF$ and $S \sccv R$. Let $U_A$ and $U_B$ be distinct CARA utilities and $V_A$ and $V_B$ be distinct CRRA utilities. Then there exist $x_0$, $k_0 >0$ such that \begin{enumerate}
    \item $S+x$ and $R+x$ generate a reversal under the Fechnerian models $(U_A,F_A)$ and $(U_B,F_B)$ for $x > x_0$.
    \item $k\cdot S$ and $k\cdot (R_\lambda S)$ generate a reversal under the Fechnerian models $(V_A,F_A)$ and $(V_B,F_B)$ for $k > k_0$ and $\lambda$ small enough.
\end{enumerate}
\end{proposition}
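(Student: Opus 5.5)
Throughout, assume without loss of generality that $A$ has the larger coefficient, $a>b$. For a menu $(S',R')$ with $S'\sccv R'$, write $\Delta_A=U_A(S')-U_A(R')$ and $\Delta_B=U_B(S')-U_B(R')$. Both are strictly positive, because CARA and CRRA Bernoulli utilities are strictly concave. The plan is to show that a reversal $F_A(\Delta_A)<F_B(\Delta_B)$ occurs whenever two things hold together: both differences are small, and the ratio $\Delta_B/\Delta_A$ exceeds a fixed threshold that depends only on $F_A,F_B$. \Cref{lem:infty_ratio} will supply the large ratio; the shift (CARA) or the mixing with $\lambda$ (CRRA) will make the differences small.

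\textbf{Local step.} Put $\alpha=F_A'(0)>0$ and $\beta=F_B'(0)>0$, and recall $F_A(0)=F_B(0)=\tfrac12$. Fix $\varepsilon\in(0,\beta/2)$. Since $F_A,F_B\in C^1_+$, there is $\delta>0$ such that for all $t\in(0,\delta)$,
\[
F_A(t)\le \tfrac12+(\alpha+\varepsilon)t, \qquad F_B(t)\ge \tfrac12+(\beta-\varepsilon)t.
\]
Set $M=(\alpha+\varepsilon)/(\beta-\varepsilon)$. Then any menu with $\Delta_A,\Delta_B\in(0,\delta)$ and $\Delta_B/\Delta_A>M$ generates a reversal, since
\[
F_A(\Delta_A)\le \tfrac12+(\alpha+\varepsilon)\Delta_A<\tfrac12+(\beta-\varepsilon)\Delta_B\le F_B(\Delta_B).
\]
This is essentially the argument behind \Cref{lem:ratio_implies}. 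Here it must be stated in this quantitative form so that we get reversals for \emph{all} large $x$ (respectively $k$), not merely the existence of some reversal.

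\textbf{CARA part.} Shifting both lotteries by $x$ gives $\Delta_A(x)=\ee^{-ax}\Delta_A(0)$ and $\Delta_B(x)=\ee^{-bx}\Delta_B(0)$, directly from $\scara_c(y+x)=\ee^{-cx}\scara_c(y)+\text{const}$. Since $a>b>0$, both differences tend to $0$ as $x\to\infty$. By \Cref{lem:infty_ratio}, the ratio equals $C_1\ee^{(a-b)x}$, which tends to $\infty$. Choose $x_0$ so that for all $x>x_0$ both differences lie below $\delta$ and the ratio exceeds $M$; both conditions are monotone in $x$, so this is possible. Also $S+x\sccv R+x$. The local step then gives a reversal for every $x>x_0$.

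\textbf{CRRA part.} Here the differences need not vanish as $k$ grows, since for coefficients below one they increase. Instead, use the mixture $R_\lambda S$. By linearity of expected utility, for $i=A,B$,
\[
U_i(kS)-U_i(k(R_\lambda S))=\lambda\bigl(U_i(kS)-U_i(kR)\bigr).
\]
Thus the ratio of differences is $C_2k^{a-b}$, independent of $\lambda$, by \Cref{lem:infty_ratio}. Choose $k_0>0$ with $C_2k_0^{a-b}\ge M$; the ratio then exceeds $M$ for every $k>k_0$. Also $kS\sccv k(R_\lambda S)$ for every $\lambda\in(0,1]$, because a mixture with a strict mean-preserving spread is still a strict spread. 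Now fix $k>k_0$ and take $\lambda$ small enough, depending on $k$, that both $\lambda$-scaled differences are below $\delta$. The local step again yields a reversal.

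\textbf{Main obstacle.} The delicate point is the local step: $F_A$ and $F_B$ are arbitrary and may be far from linear. The fix is to bound the linearization errors by a single $\varepsilon$ on a common neighbourhood $(0,\delta)$. This gives one fixed threshold $M$ that works simultaneously for all small menus, and that uniformity is what lets the monotone behaviour in \Cref{lem:infty_ratio} deliver reversals for all $x>x_0$ and all $k>k_0$.
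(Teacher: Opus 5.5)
Your proof is correct and follows the paper's argument. Your local step is exactly \cref{lem:par_seq}, and, as in the paper, you combine it with the divergent ratio from \cref{lem:infty_ratio}, vanishing differences under the shift for CARA, and the $\lambda$-mixture for CRRA. The one minor difference is that you get the vanishing CARA differences from the exact identity $\Delta_A(x)=\ee^{-ax}\Delta_A(0)$ rather than from the paper's sandwich argument in \cref{lm:diff_zero}, which is shorter and also yields the monotonicity you use to obtain a reversal for every $x>x_0$.
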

\Cref{prop:stakes} is proved in \Cref{sec:prop_cara}. While CARA preferences are invariant to changes in background wealth, the above proposition shows that non-monotonicity occurs in FEU models under CARA utilities for any concave-ordered lotteries under sufficient background wealth. Likewise, while CRRA preferences are invariant to the scaling of stakes and to mixing with the safe lottery, these transformations can always generate a non-monotonicity.

\Cref{prop:stakes} demonstrates that non-monotonicity is a pervasive problem. It holds for all pairs of risk parameters, for all pairs of Fechnerian noise specifications, and for all concave-ordered lotteries after a transformation that preserves the concave ordering as well as the family of preferences being measured. 
\section{Weak Utility Models}\label{sec:menu_dep}
In the previous section, we studied models in which noise varies across individuals but not across menus for the same individual. 
In this section, we study a more flexible model that incorporates menu-dependent noise and show that the reversals under CARA/CRRA utilities persist.

Formally, $\rho$ has a  \emph{weak expected-utility} (WEU) representation $(U,F,\sigma)$ if
\begin{align}\label{eq:weak_util}
  \rho(X,Y) =F\left(\frac{U(X)-U(Y)}{\sigma(X,Y)}\right),  
\end{align} where $U$ is a vNM utility function, $F \in \cF$ is a Fechnerian noise structure, and $\sigma \colon \cL \times \cL \to (0,\infty)$ is a menu-dependent scale parameter satisfying $\sigma(X,Y)=\sigma(Y,X)$. Symmetry of $\sigma$ is required so that $\rho(X,Y)+\rho(Y,X)=1$.

An example of WEU is the \emph{menu-dependent probit model}. In this model, the shock terms $\varepsilon$ and $\varepsilon'$ in \eqref{eq:iid} follow normal distributions, but the identicality assumption is relaxed to allow their variances to depend on the lotteries. That is, 
\[\rho(X,Y)=\mP(U(X)+\varepsilon_X \ge U(Y)+\varepsilon_Y),\] where $\varepsilon_X$ and $\varepsilon_Y$ are independent and normal random variables with means zero and variances $\sigma_X^2$ and $\sigma_Y^2$.\footnote{In this case, $\rho$ can be written as \eqref{eq:weak_util} where $F$ is the CDF of the standard normal distribution and $\sigma(X,Y)=\sqrt{\sigma_X^2+\sigma_Y^2}$.} For example, $\sigma_X$ may depend on the variance of lottery $X$ in the menu. This can capture the intuitive idea that more variable lotteries may generate more noise, allowing for mistake probabilities to remain high even when stakes are high. 

Another example is the \emph{strict utility model} of \cite{luce1965preference} in which the choice probabilities are determined by the utility ratio
\begin{align*}
    \rho(X,Y)=\frac{U(X)}{U(X)+U(Y)}, 
\end{align*}
rather than by the utility difference as in FEU models (see \eqref{eq:Fechner}) of the previous section.
Indeed, any \emph{simply scalable expected utility} (SSEU) representation, which takes the form \begin{align}\label{eq:SSEU}
\rho(X,Y)=H(U(X),U(Y)),
\end{align} where $H:\R^2
 \to [0,1]$ satisfies $H(s,t)+H(t,s)=1$ and is strictly increasing in its first argument, has a WEU representation.\footnote{To see that the simply scalable model is a special case of the weak utility model, take 
\[
 \sigma(X,Y)=\begin{cases}
     \frac{U(X)-U(Y)}{F^{-1}(H(U(X),U(Y)))} & U(X) \neq U(Y)\\
     s > 0 & U(X)=U(Y)
 \end{cases} 
\] in \eqref{eq:weak_util} to derive 
$\rho(X,Y)=H(U(X),U(Y))$.}

As the examples above suggest, the menu-dependent scale parameter $\sigma$ allows for a highly flexible stochastic choice model. In fact, without imposing restrictions on $\sigma$, the sole behavioral implication of WEU is that individuals select their preferred lottery with probability greater than one-half.\footnote{Since $F(t/\sigma(X,Y)) \in \cF$ for all $(X,Y) \in \cL\times\cL$, if $U(X)-U(Y) \ge 0$, then $F\left(\frac{U(X)-U(Y)}{\sigma(X,Y)}\right)\ge \frac{1}{2}$ with equality if and only if $U(X)=U(Y)$. Conversely, if $U(X)>U(Y)$, then for any $p \in (\frac{1}{2},1)$, letting $\sigma(X,Y)=\frac{U(X)-U(Y)}{F^{-1}(p)}$ yields the desired probability.}  In order to rule out pathological noise assignments, we require that menus with similar lotteries are associated with similar menu-dependent scale parameters. 
We say that a sequence of lotteries $X_1,X_2,\dots$ converges to a lottery $X$ if for all %increasing and 
continuous 
functions $u$ it holds that \[
\E{|u(X_n)-u(X)|}\to 0.
\] In other words, $X_1,X_2,\dots$ converges to $X$ if the sequence of random utilities $u(X_1),u(X_2),\dots$ converges to $u(X)$ in $L^1$ for every %increasing and 
continuous Bernoulli 
$u$. We say that $\sigma$ is \emph{continuous} if whenever $X_n \to X$ and $Y_n \to Y$, it holds that $\sigma(X_n,Y_n) \to \sigma(X,Y)$. We say a WEU representation $(U,F,\sigma)$ is \emph{continuous} when $\sigma$ is continuous.\footnote{Formally, we require $\sigma$ to be sequentially continuous in the product topology on $\cL\times \cL$, when $\cL$ is equipped with the coarsest topology making each $X\mapsto u(X)$ continuous as an $L^1$-valued map. %This topology is the same whether we consider all continuous $u$ or only those that are increasing.
This topology is very fine (finer than the weak topology and any $L^p$ topology for $1\le p <\infty$), making many functions on $\cL$ continuous.}

Importantly, vNM utilities are continuous, i.e., $U(X_n) \to U(X)$ whenever $X_n \to X$, so we can allow $\sigma$ to depend continuously on $U$. Moreover, $\sigma$ is not required to depend only on the marginal distributions of the lotteries in the menu; it may, for example, depend on their joint distribution. 

\subsection{Characterization}
Our next theorem characterizes when two vNM utilities imply reversals in the continuous WEU framework. 
Given two increasing and concave Bernoulli utility functions $u_A$ and $u_B$, we say that $u_A$ is \emph{relatively more concave} than $u_B$, and write $u_A \rel u_B$ if there exists $k>0$ such that $k\cdot u_A(x)- u_B(x)$ is concave.\footnote{Recall that $u_A$ is more risk-averse than $u_B$ in the Arrow-Pratt sense if $u_A=f\circ u_B$ for some increasing and concave function $f$. This notion of comparative concavity is logically independent of $\rel$. In \cref{sec:notions}, we show that the stronger comparative risk notion of \cite{ross1981some} implies $\rel$.
} 
We extend this notion to vNM utilities $U_A=\E{u_A}$ and $U_B=\E{u_B}$, writing
$U_A \rel U_B$ if $u_A \rel u_B$. 
 For twice differentiable functions with $u_A''<0$, the condition that $u_A \rel u_B$ is equivalent to the boundedness of $u_B''(x)/u_A''(x)$.
 Note that the $\rel$ order is preserved under positive affine transformations of the two utility functions and is thus a relation on vNM preferences, i.e., it does not depend on utility representations. We write $U_A \norel U_B$ if it is not the case that $U_A\rel U_B$.

\begin{maintheorem}\label{th:finalboss}
Let $U_A$ and $U_B$ be vNM utilities with $U_A$ more risk-averse. Then $U_A,U_B$ imply reversals in the continuous WEU framework if and only if $U_A \norel U_B$. 
\end{maintheorem}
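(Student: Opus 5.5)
We prove the two directions separately. In both, the central object is the ratio of utility differences
\[
r(S,R)=\frac{U_B(S)-U_B(R)}{U_A(S)-U_A(R)},\qquad S\sccv R .
\]

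\emph{Sufficiency} (if $U_A \norel U_B$, then $U_A,U_B$ imply reversals in the continuous WEU framework). The plan has three steps.

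First, translate $\norel$ into an unbounded ratio. The function $k u_A-u_B$ is concave if and only if $k\,U_A(S)-U_B(S)\ge k\,U_A(R)-U_B(R)$ for all $S\sccv R$. One direction is immediate from the definition of $\ccv$. For the converse, test with two-point mean-preserving spreads of a point mass: these characterize midpoint concavity of a continuous function. Hence $U_A\norel U_B$ means that for every $k>0$ there exist $S\sccv R$ with $r(S,R)>k$. This uses $U_A(S)-U_A(R)>0$, which holds because $u_A$ is strictly concave somewhere relevant. If $U_A(S)=U_A(R)$ while $U_B(S)>U_B(R)$, the ratio is $+\infty$, which is even better.

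Second, shrink the utility differences while keeping the ratio fixed, as in \Cref{lem:ratio_implies}. Use the compound lotteries $R_\lambda S$; linearity keeps $r(S,R_\lambda S)=r(S,R)$, and $R_\lambda S\to S$ in the topology of the paper as $\lambda\to0$.

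Third, compare choice probabilities. Continuity of $\sigma_i$ gives $\sigma_i(S,R_\lambda S)\to\sigma_i(S,S)$, a positive number. Differentiability of $F_i$ at $0$ then gives
\[
\rho_i(S,R_\lambda S)=\tfrac12+\lambda F_i'(0)\,\frac{U_i(S)-U_i(R)}{\sigma_i(S,S)}+o(\lambda).
\]
A reversal therefore follows once
\[
r(S,R)>\frac{F_A'(0)\,\sigma_B(S,S)}{F_B'(0)\,\sigma A(S,S)} .
\]

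The main obstacle is that this threshold depends on $S$, since $\sigma_i(S,S)$ varies with $S$. So it is not enough to know the ratio is unbounded; we need it to beat a threshold attached to the same $S$. I will proceed as follows.

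\begin{itemize}
\item By the first step, large ratios can be localized: when $\norel$ holds, they are achievable with $S$ a degenerate lottery $s$ and $R$ a small two-point spread around it. Indeed, unboundedness of $u_B''/u_A''$ in the smooth case, or of the corresponding ratio of second-difference quotients in general, occurs near some point or sequence.
\item If the bad points form a sequence $s_n$ escaping to infinity, the thresholds $\sigma_B(s_n,s_n)/\sigma_A(s_n,s_n)$ may blow up, and continuity alone may not control them. Here I will use the fact that the ratio is unbounded and look for a single $s$ together with spreads $R$ near $s$ whose ratio exceeds the fixed threshold $c(s)$.
\item To do this, mix: take $S=s$ and $R=$ a mixture of $s$ with small weight on a spread centered at a far point $t$. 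This is not mean-preserving around $s$, however, so I must be careful.
\item The cleaner fix is to use joint-distribution freedom. Choose $S$ as a fixed lottery that mixes the point $s$ with a bad spread region, so that $S$ stays in a compact family. I expect this localization step to require the most care.
\end{itemize}

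\emph{Necessity} (if $U_A\rel U_B$, some continuous WEU models produce no reversals). Take $k$ with $k u_A-u_B$ concave. Then for all $S\sccv R$ we have $U_B(S)-U_B(R)\le k\,(U_A(S)-U_A(R))$. Choose $F_A=F_B=F$ with constant scales $\sigma_A=1$ and $\sigma_B=k$; constant scales are trivially continuous. Then
\[
\rho_B(S,R)=F\!\left(\frac{U_B(S)-U_B(R)}{k}\right)\le F\bigl(U_A(S)-U_A(R)\bigr)=\rho_A(S,R),
\]
so there is no reversal, and hence $U_A,U_B$ do not imply reversals.
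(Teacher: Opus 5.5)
Your argument that $U_A\rel U_B$ leaves room for noise with no reversals is correct and matches the paper's: constant scales with a common $F$.

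The other direction has a genuine gap, exactly at the obstacle you flag. With menus $(S,R_\lambda S)\to(S,S)$, the threshold $F_A'(0)\sigma_B(S,S)/(F_B'(0)\sigma_A(S,S))$ moves with $S$, and none of your bullets removes this dependence.

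Your first bullet is false in general. Take CARA with $a>b$, write $R=S+W$ with $\E{W}{S}=0$, and set $\phi_c(w)=\ee^{-cw}-1+cw\ge 0$. Then $r(S,R)=\frac{a}{b}\,\E{\ee^{-bS}\phi_b(W)}/\E{\ee^{-aS}\phi_a(W)}\le C\,\ee^{(a-b)\,\mathrm{ess\,sup}\,S}$, where $C$ depends only on $a,b$ because $\phi_b/\phi_a$ is bounded. So for a fixed $S$ the ratio is bounded. Large ratios force the pair off to infinity, where continuity of $\sigma_i$ says nothing about $\sigma_B(S,S)/\sigma_A(S,S)$. Mixing only $R$ breaks $S\sccv R$, as you note. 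Your last bullet ("joint-distribution freedom", a "compact family") does not specify a construction.

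The missing idea, which is the paper's \cref{lem:ratio_implies_MD}, is to reverse the order of choices. Fix the limit menu first, then choose the high-ratio pair, then mix both lotteries toward that limit.
\begin{itemize}
\item Fix $Z,Z'$ with the same distribution, for instance a common constant.
\item Pick $\varepsilon_A>F_A'(0)/\sigma_A(Z,Z')$ and $0<\varepsilon_B<F_B'(0)/\sigma_B(Z,Z')$.
\item Only then use your Step 1 to choose $X\sccv Y$ with $r(X,Y)\ge\varepsilon_A/\varepsilon_B$.
\item With $\Lambda$ uniform and independent of everything, let $S_\lambda=X$ and $R_\lambda=Y$ on $\{\Lambda\le\lambda\}$, and $S_\lambda=Z$, $R_\lambda=Z'$ otherwise.
\end{itemize}
Then $S_\lambda\sccv R_\lambda$ and $U_i(S_\lambda)-U_i(R_\lambda)=\lambda\,(U_i(X)-U_i(Y))$, so the ratio is preserved while both differences vanish. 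Moreover $(S_\lambda,R_\lambda)\to(Z,Z')$ however extreme $X$ and $Y$ are, since $\E{|u(S_\lambda)-u(Z)|}\le\lambda\,\|u(X)-u(Z)\|_\infty$ for every continuous $u$. Continuity therefore gives $\sigma_i(S_\lambda,R_\lambda)\to\sigma_i(Z,Z')$, a threshold fixed before $X,Y$ were chosen. Your first-order expansion from Step 3, with $\sigma_i(Z,Z')$ in place of $\sigma_i(S,S)$ and made uniform as in \cref{lem:uniform}, then delivers the reversal.
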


We emphasize that $U_A \rel U_B$ only implies the existence of noises for $A$ and $B$ that can accommodate no reversals. In the proof of sufficiency below, we show that many standard forms of noise, such as heteroskedastic Gaussian shocks, succeed at eliminating reversals. Of course, there will always exist other noises such that reversals will occur. For example, if $A$ experiences much more noise than $B$, then $A$ will make more mistakes and may choose some riskier options more frequently than $B$, even with $U_A \rel U_B$. On the other hand, when $U_A$ is not relatively more concave than $U_B$, reversals must occur regardless of $A$'s and $B$'s noises. Thus, $\rel$ characterizes when reversals are not driven by the utilities themselves, but can arise only from relative noisiness.

The necessity of $U_A\rel U_B$ to avoid reversals is proved in \Cref{ap:menu}. A key idea is to show that if $U_A$ is not relatively more concave than $U_B$, then the ratio \[\frac{U_B(S)-U_B(R)}{U_A(S)-U_A(R)}\] is unbounded over the set of lottery pairs $S \sccv R$. Under continuity of $\sigma_A$ and $\sigma_B$, we show that this unboundedness leads to reversals using an argument similar to that of \cref{th:main}. This argument relies on the fact that continuity imposes a bound on the relative noisiness of $A$ and $B$ for some lottery comparisons. We note that limitations on relative noisiness lead to reversals even under some discontinuous menu-dependent scale parameters. For example, for $(U_A,F,\sigma_A)$ and $(U_B,F,\sigma_B)$, if $\sigma_B(S,R)/\sigma_A(S,R)$ is bounded, there will necessarily be reversals even if $\sigma_A$ and $\sigma_B$ are not continuous.

We prove sufficiency of $U_A \rel U_B$ here by showing that even under a simple scale-family (e.g., a heteroskedastic probit model), without menu dependence, this condition suffices. Indeed, suppose that $ku_A - u_B$ is concave for some $k>0$. Let $F \in \cF$ (e.g., the CDF of the standard normal distribution) and let $F_A=F_B=F$ . Let $\sigma_A(X,Y)=1/k$ and $\sigma_B(X,Y)=1$ for all $X,Y$.
Then for any lotteries $S\ccv R$,
\begin{align}
    \rho_A(S,R)&= F_A(k(U_A(S)-U_A(R))) \label{eq:A1}\\ \rho_B(S,R)&=F_B(U_B(S)-U_B(R)).\label{eq:B2}
\end{align} Since $ku_A-u_B$ is concave, $kU_A(S)-U_B(S)\ge kU_A(R)-U_B(R)$. Hence, since $F_A=F_B$ is strictly increasing, \eqref{eq:A1} exceeds \eqref{eq:B2}, meaning that there are no reversals.\footnote{Note that for this direction, we can relax the requirement that $F \in \sm$.}

\medskip
\Cref{th:finalboss} establishes a simple test for identifying whether a pair of vNM utilities generates reversals. Indeed, for CARA and CRRA utilities, $\scara_b''(x)/\scara_a''(x)=(b/a)e^{-(b-a)x}$ and $\scrra_b''(x)/\scrra_a''(x)=(b/a)x^{\,a-b}$ are unbounded for $a> b$. 
Thus, it follows from \cref{th:finalboss} that reversals of Fechnerian CARA and CRRA models cannot be eliminated by allowing continuous menu-dependent scale parameters.

\begin{corollary}\label{th:menu_dep}
Let $U_A$ and $U_B$ be  distinct $\cara$ $(\crra)$ utilities. Then $U_A$ and $U_B$ imply reversals in the continuous WEU framework.
\end{corollary}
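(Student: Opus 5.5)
This follows from \cref{th:finalboss}. It suffices to check that for distinct CARA (CRRA) utilities, with $A$ the more risk-averse one, $U_A \norel U_B$. The more-risk-averse hypothesis holds automatically: CARA and CRRA families are totally ordered by Arrow--Pratt risk aversion, so we may label the higher coefficient $a>b$ as $A$. Since $\rel$ is invariant under positive affine transformations, the conclusion does not depend on the chosen parameterization of the families.

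The key step is the curvature computation. For CARA, $\scara_a''(x)=-a\ee^{-ax}$, so
\[\frac{\scara_b''(x)}{\scara_a''(x)}=\frac{b}{a}\ee^{(a-b)x},\]
which is unbounded as $x\to\infty$ when $a>b$. For CRRA, $\scrra_a''(x)=-a x^{-a-1}$ on $(0,\infty)$, so
\[\frac{\scrra_b''(x)}{\scrra_a''(x)}=\frac{b}{a}x^{a-b},\]
which is unbounded as $x\to\infty$. In both cases $u_A''<0$ everywhere.

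It remains to justify the equivalence between $u_A\rel u_B$ and boundedness of $u_B''/u_A''$. Suppose $ku_A-u_B$ is concave for some $k>0$. Then $ku_A''-u_B''\le 0$, that is, $u_B''/u_A''\le k$ everywhere, since $u_A''<0$. Unboundedness therefore rules out every $k$, so $U_A\norel U_B$. For CRRA, where the domain is $(0,\infty)$, concavity is understood on that domain; this is consistent with the paper's restriction to lotteries bounded away from $0$, and the ratio already blows up at large $x$ within that domain.

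With $U_A\norel U_B$ established, \cref{th:finalboss} gives that $U_A$ and $U_B$ imply reversals in the continuous WEU framework. No step is a real obstacle. The only point requiring care is that the CRRA domain is $(0,\infty)$ rather than $\R$, which I handle by applying \cref{th:finalboss} to lotteries supported in $(0,\infty)$.
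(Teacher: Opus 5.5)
Your proposal is correct and follows the paper's argument: the paper likewise computes $u_B''/u_A''=(b/a)\ee^{(a-b)x}$ for CARA and $(b/a)x^{a-b}$ for CRRA, notes that both are unbounded for $a>b$, and concludes via \cref{th:finalboss}. Your explicit check that concavity of $ku_A-u_B$ forces $u_B''/u_A''\le k$ spells out the equivalence the paper states in passing when it introduces $\rel$.
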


\Cref{th:menu_dep} highlights that the non-monotonicity result of the menu-independent FEU models (\Cref{th:main}) is not merely a byproduct of asymptotic properties of choice probabilities when stakes are increased. Indeed, in the following example scale parameters are engineered so that choice probabilities are scale-invariant, yet monotonicity does not obtain.

\begin{example}\label{ex:scale}
Let $a > b > 1$, $U_A=\crra_a$, and $U_B=\crra_b$. Let $F$ be the CDF of the standard normal distribution.  Let \begin{align*}
       \sigma_A(X,Y)={\E{X^{a-1}+Y^{a-1}}}^{-1},\qquad  \qquad  
       \sigma_B(X,Y)={\E{X^{b-1}+Y^{b-1}}}^{-1}.
    \end{align*}
Then for all $k > 0$ and for $i \in \{A,B\}$, we have
\begin{align*}
    F\left(\frac{U_i(k X)-U_i(kY)}{\sigma_i(kX,kY)}\right)&=F\left(\frac{U_i(X)-U_i(Y)}{\sigma_i(X,Y)}\right).
\end{align*}

\end{example}
Note that $\sigma_A,\sigma_B$ decrease as $X$ and $Y$ are scaled up, capturing the behavioral property that individuals pay more attention when facing higher stakes. Moreover, the standard deviations are chosen so that each individual's probability of choosing $\beta X$ over $\beta Y$ does not depend on $\beta>0$, as  $\sigma_A(\beta X,\beta Y)=\beta^{1-a}\sigma_A(X,Y)$ and \[U_A(\beta X)-U_A(\beta Y)=\beta^{1-a}(U_A(X)-U_A(Y)).\]

Since $\sigma_A(X_n,Y_n) \to \sigma_A(X,Y)$ whenever $X_n \to X$ and $Y_n \to Y$, the model is continuous. 
Thus, it follows from \Cref{th:menu_dep} that even these calibrated models cannot avoid reversals. 

Another example of continuous WEU is when $\rho$ admits an SSEU representation $(U,H)$ given in \eqref{eq:SSEU}, above, where $H$ is further assumed to be differentiable and its first partial derivative, $H_1$, is continuous and positive. Indeed, for \[
 \sigma(X,Y)=\begin{cases}
     \dfrac{U(X)-U(Y)}{F^{-1}(H(U(X),U(Y)))} & U(X) \neq U(Y)\\[1em]
     \dfrac{F'(F^{-1}(H(U(Y), U(Y))))}{H_1(U(Y), U(Y))} & U(X)=U(Y),
 \end{cases}
\] the WEU model $(U,F,\sigma)$ coincides with the SSEU model $(U,H)$. 
Moreover, $\sigma$ is continuous since $U$, $F^{-1}$, $F'$, $H$, and $H_1$ are continuous and the two cases coincide in the limit as $U(X_n)$ and $U(Y_n)$ tend to the same value.

While continuity is a standard assumption, it does rule out some popular models. For example, consider the constant error model $\rho$  where $\rho(X,Y)=p$ if $U(X)>U(Y)$ for some $p \in (\frac{1}{2},1)$ and $\rho(X,Y)=\frac{1}{2}$ if $U(X)=U(Y)$. This has a WEU model representation $(U,F,\sigma)$ where $F(1)=p$ and $\sigma(X,Y)=|U(X)-U(Y)|$ for $U(X)\neq U(Y)$. However, since $\sigma$ cannot be zero, there is no value $\sigma$ can take when $U(X)=U(Y)$ to make $\sigma$ continuous. 
Note, however, any two vNM utilities $U_A$ and $U_B$ with $U_A$ more risk-averse will not imply reversals in the constant error model framework as long as $A$ is associated with a higher $p$ since $\rho_A(S,R)=p_A$ and $\rho_B(S,R)=p_B$ for all $S \sccv R$. 
\medskip

One may guess, at this point, that reversals arise in WEU models under CARA and CRRA utilities because the Fechnerian noise and menu-dependent scale parameter are applied to the vNM utility $U$, which is merely a representation of the preference and carries no cardinal significance. A natural alternative would be to calculate, for each lottery $X$, its certainty equivalent $CE(X)=u^{-1}(\E{u(X)})$, and let \[
\rho(X,Y) =F\left(\frac{CE(X)-CE(Y)}{\sigma(X,Y)}\right).
\] However, the next result demonstrates that this adjustment does not resolve the issue. In fact, reversals persist even if we more generally consider utility representations of the form $f \circ U$ for an arbitrary $f \in C^1_+$.

\begin{corollary}\label{cor:transform}
Let $F_A,F_B\in \cF$, $f_A,f_B\in C^1_+$ and let $\sigma_A$ and $\sigma_B$ be continuous. For any distinct $\cara$ $(\crra)$ utilities $U_A$ and $U_B$, there are reversals for $(f_A\circ U_A,F_A,\sigma_A)$ and  $(f_B\circ U_B,F_B,\sigma_B)$.
\end{corollary}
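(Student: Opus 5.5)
The idea is to reduce \Cref{cor:transform} to \Cref{th:finalboss}, applied to the original vNM utilities $U_A,U_B$, by absorbing the transformations $f_A,f_B$ into new menu-dependent scale parameters. For $i\in\{A,B\}$ define
\[
\tilde\sigma_i(X,Y)=\sigma_i(X,Y)\cdot\frac{U_i(X)-U_i(Y)}{f_i(U_i(X))-f_i(U_i(Y))}\quad\text{if } U_i(X)\neq U_i(Y),
\]
and $\tilde\sigma_i(X,Y)=\sigma_i(X,Y)/f_i'(U_i(X))$ otherwise. Since $f_i$ is strictly increasing, $\tilde\sigma_i>0$. The function $\tilde\sigma_i$ is symmetric in $(X,Y)$. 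Moreover, $(f_i\circ U_i,F_i,\sigma_i)$ and $(U_i,F_i,\tilde\sigma_i)$ generate the same $\rho_i$, since
\[
\frac{f_i(U_i(X))-f_i(U_i(Y))}{\sigma_i(X,Y)}=\frac{U_i(X)-U_i(Y)}{\tilde\sigma_i(X,Y)}
\]
in both cases (trivially when the utilities coincide).

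The main step is continuity of $\tilde\sigma_i$. Write the difference quotient as
\[
q_i(s,t)=\frac{f_i(s)-f_i(t)}{s-t},\qquad q_i(s,s)=f_i'(s).
\]
Because $f_i\in C^1_+$, the mean value theorem gives $q_i(s,t)=f_i'(\xi)$ for some $\xi$ between $s$ and $t$. Hence $q_i$ is jointly continuous on the domain of $f_i$ and strictly positive. Then $\tilde\sigma_i(X,Y)=\sigma_i(X,Y)/q_i(U_i(X),U_i(Y))$. vNM utilities are continuous in the paper's topology, so $U_i(X_n)\to U_i(X)$ whenever $X_n\to X$. Therefore $\tilde\sigma_i$ is a composition and quotient of continuous, positive maps, hence continuous.

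One technical point must be handled. $f_i$ is defined on an open interval, so the range of $U_i$ on the relevant lotteries must lie in that interval; this is implicit in the statement, since $f_i\circ U_i$ must make sense. For CRRA, as usual, we restrict attention to lotteries bounded away from $0$.

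With $\tilde\sigma_A,\tilde\sigma_B$ continuous, $(U_A,F_A,\tilde\sigma_A)$ and $(U_B,F_B,\tilde\sigma_B)$ are continuous WEU representations with CARA (CRRA) utilities. \Cref{th:menu_dep}, equivalently \Cref{th:finalboss} combined with the unboundedness of $u_B''/u_A''$ computed above, then yields $S\sccv R$ with $\rho_A(S,R)<\rho_B(S,R)$, where $A$ is the more risk-averse of the two. These are exactly the choice probabilities of the transformed models, which proves the corollary. The only real obstacle I expect is the continuity of $\tilde\sigma_i$ at menus where $U_i(X)=U_i(Y)$, which the difference-quotient argument above settles.
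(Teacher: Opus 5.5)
Your proposal is correct and is essentially the paper's argument: the paper likewise rewrites $(f\circ U,F,\sigma)$ as the WEU model $(U,F,\tau)$ with $\tau(X,Y)=\sigma(X,Y)\frac{U(X)-U(Y)}{f(U(X))-f(U(Y))}$ when $U(X)\neq U(Y)$ and $\tau(X,Y)=\sigma(X,Y)/f'(U(Y))$ otherwise, checks that $\tau$ is continuous, and then invokes \Cref{th:menu_dep}. Your mean-value-theorem argument for the joint continuity of the difference quotient simply makes explicit the paper's remark that the two cases coincide in the limit.
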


\Cref{cor:transform} shows that reversals are a fundamental feature of CARA and CRRA preferences, not an artifact of the utility representation. The underlying mechanism is straightforward: for any transformation $f \in \sm$, the WEU model $(f\circ U,F,\sigma)$ has a WEU representation $(U,F,\tau)$, where the new menu-dependent scale parameter $\tau$ remains continuous.\footnote{Here, \[
 \tau(X,Y)=\begin{cases}
     \frac{U(X)-U(Y)}{f(U(X))-f(U(Y))}\sigma(X,Y) & U(X) \neq U(Y)\\[1em]
     \frac{\sigma(X,Y)}{f'(U(Y))} & U(X)=U(Y),
 \end{cases}
\] which is continuous since $U$, $f$, $f'$, and $\sigma$ are continuous and the two cases coincide in the limit.} Given this, the result follows directly from \cref{th:menu_dep}. Since inverses of $\scara$ and $\scrra$ belong to $\sm$, replacing CARA and CRRA expected utilities with their certainty equivalents cannot eliminate reversals.

\medskip
\Cref{th:menu_dep,cor:transform} demonstrate that CARA and CRRA utilities generate reversals whenever choices are noisy, even if we use very general utility representations and allow noise structures to vary across individuals and menus. Thus, if we hope to model the stochastic choice behavior of $A$ and $B$, where $A$ consistently chooses safer lotteries more frequently than $B$, we must forgo CARA and CRRA preferences. Moreover, the boundedness of $u_B''/u_A''$ is violated by many parametric families of utilities that are used to model risk aversion. In particular, we show  in \cref{ap:non_ex} that expo-power utilities---a popular generalization of CARA---violate the condition, implying reversals in the continuous WEU framework. Similarly, we demonstrate that \emph{hyperbolic absolute risk aversion} (HARA) utilities---a popular generalization of CRRA---generally imply reversals.

We take on these challenges in the next section, where we identify the HARA subfamilies that deliver consistent comparative statics %under standard models of noise, 
and then show how to construct simple vNM utility families that retain this property beyond HARA.

\section{Monotone Alternatives to CARA and CRRA}\label{sec:mono}

HARA utility is a two-parameter family with a property that the reciprocal of absolute risk aversion is affine in wealth, i.e., \[A_{\alpha,\beta}(x)=\frac{1}{\alpha+\beta x}.\] A standard parameterization of HARA utility is given by 
\begin{align*}
\shara_{\alpha,\beta}(x)=\begin{cases}
\ln(\alpha+x)& \beta=1 \\
    \frac{1}{\beta-1}(\alpha+\beta x)^{1-\frac{1}{\beta}} & \beta \neq 1,
\end{cases}
\end{align*} for $\alpha \ge 0$ and $ \beta,x > 0$. Note that $\shara_{0,\beta}=\scrra_{\frac{1}{\beta}}$. 

\begin{proposition}\label{prop:HARA}
    Let $u_A=\shara_{\alpha_A,\beta_A}$ and $u_B=\shara_{\alpha_B,\beta_B}$ and suppose that $u_A$ is more risk-averse than $u_B$. Then $u_A \rel u_B$ if and only if $\alpha_A < \alpha_B$ and $\beta_A=\beta_B$.
\end{proposition}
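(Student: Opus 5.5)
The plan is to reduce everything to the criterion stated just before \Cref{th:finalboss}: for twice differentiable utilities with $u_A''<0$, $u_A\rel u_B$ holds if and only if $u_B''/u_A''$ is bounded. Everything then comes down to explicit derivatives of the HARA family on the domain $x>0$, plus a translation of ``more risk-averse'' into inequalities on the parameters.

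\emph{Derivatives.} For $\beta\neq 1$, differentiating gives $\shara_{\alpha,\beta}'(x)=(\alpha+\beta x)^{-1/\beta}$, and for $\beta=1$ one gets $(\alpha+x)^{-1}$, the same formula. Hence
\[\shara_{\alpha,\beta}''(x)=-(\alpha+\beta x)^{-1/\beta-1}<0,\]
and therefore
\[
\frac{u_B''(x)}{u_A''(x)}=\frac{(\alpha_A+\beta_A x)^{1/\beta_A+1}}{(\alpha_B+\beta_B x)^{1/\beta_B+1}},\qquad x>0.
\]

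\emph{Risk-aversion constraint.} The Arrow--Pratt coefficient is $A_{\alpha,\beta}(x)=1/(\alpha+\beta x)$. So $u_A$ being weakly more risk-averse than $u_B$ means
\[\alpha_A+\beta_A x\le \alpha_B+\beta_B x \quad\text{for all } x>0.\]
Letting $x\to 0^+$ gives $\alpha_A\le\alpha_B$, and letting $x\to\infty$ gives $\beta_A\le\beta_B$. Strictness (that $u_B$ is not weakly more risk-averse than $u_A$) rules out the case $(\alpha_A,\beta_A)=(\alpha_B,\beta_B)$.

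\emph{Necessity.} Suppose $\beta_A<\beta_B$. As $x\to\infty$ the ratio behaves like a positive constant times $x^{1/\beta_A-1/\beta_B}$, which is unbounded, so $u_A\norel u_B$. Hence $u_A\rel u_B$ forces $\beta_A=\beta_B$. The constraint then gives $\alpha_A\le\alpha_B$, and distinctness of the two utilities upgrades this to $\alpha_A<\alpha_B$.

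\emph{Sufficiency.} Suppose $\beta_A=\beta_B=\beta$ and $\alpha_A<\alpha_B$. Then the ratio equals
\[\left(\frac{\alpha_A+\beta x}{\alpha_B+\beta x}\right)^{1/\beta+1}\in(0,1]\quad\text{for all } x>0.\]
This remains true when $\alpha_A=0$, where the ratio tends to $0$ as $x\to0^+$. So the ratio is bounded and $u_A\rel u_B$.

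The only delicate point is the behaviour at the boundaries of the domain. Near $x=0$ the second derivatives can blow up when $\alpha=0$, and at infinity the exponents compete. I would handle these by separately checking the limits $x\to 0^+$ and $x\to\infty$ of the ratio, since it is continuous and positive on the interior. I would also confirm that a HARA domain restriction of the form $\alpha+\beta x>0$ does not change anything, given that $\alpha\ge0$ and $\beta,x>0$.
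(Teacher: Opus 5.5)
Your proposal is correct and follows essentially the same route as the paper. Both arguments derive $\alpha_A\le\alpha_B$ and $\beta_A\le\beta_B$ (not both equalities) from the Arrow--Pratt constraint, compute $u_B''/u_A''=(\alpha_A+\beta_A x)^{1+1/\beta_A}/(\alpha_B+\beta_B x)^{1+1/\beta_B}$, and note that this ratio is unbounded as $x\to\infty$ unless $\beta_A=\beta_B$. You additionally spell out the sufficiency direction (the ratio is at most $1$, so $k=1$ works), which the paper leaves implicit.
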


The proof of \cref{prop:HARA} is straightforward and is in \cref{ap:HARA}. 
 The coefficient $\beta$ is Wilson's \citeyearpar{wilson1968theory} cautiousness parameter, and these one-parameter subfamilies of HARA, therefore, are known as \emph{equicautious} \cite*[see, e.g.,][]{amershi83}. Equicautious HARA is central to the finance literature, where it was shown to characterize linear risk-sharing rules \citep{wilson1968theory,amershi83} and two-fund monetary separation \citep{CASS1970122}. \cref{prop:HARA} sheds new light on this classical family, demonstrating that equicautious HARA utilities can accommodate consistent comparative statics in stochastic choice unlike CARA, CRRA, expo-power utilities and HARA utilities with heterogeneous cautiousness parameters.

We conclude with an approach to constructing utility families ordered by both risk aversion and $\rel$.
Recall that for strictly increasing concave functions $f$ and $g$, $f$ is \emph{more concave} than $g$ if $f\circ g^{-1}$ is increasing and concave. The following proposition provides a simple way to construct a parametric utility family that can accommodate consistent comparative statics. 
\begin{proposition}\label{rem:linex}
 Let $f$ and $g$ be strictly increasing concave functions such that $f$ is more concave than $g$. Let $u_A(x)=af(x)+(1-a)g(x)$ and $u_B(x)=bf(x)+(1-b)g(x)$, where $1>a > b \ge 0$. Then $U_A=\E{u_A}$ is more risk-averse than $U_B=\E{u_B}$ and $U_A \rel U_B$. 
\end{proposition}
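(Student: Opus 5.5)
Two things must be shown: that $U_A$ is more risk-averse than $U_B$ in the Arrow--Pratt sense of the paper, and that $U_A \rel U_B$. The second is purely algebraic, so I will do it first.

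\emph{Relative concavity.} I look for $k>0$ such that $k u_A - u_B$ is concave. Expanding,
\[
k u_A - u_B = (ka - b) f + \bigl(k(1-a) - (1-b)\bigr) g .
\]
Since $f$ and $g$ are concave, it suffices that both coefficients are nonnegative, that is, $k \ge b/a$ and $k \ge (1-b)/(1-a)$. Take $k = \max\{b/a,\,(1-b)/(1-a)\}$. This is finite because $a>0$ (as $a>b\ge 0$) and $a<1$. Then $k u_A - u_B$ is a nonnegative combination of concave functions, hence concave, so $u_A \rel u_B$ and therefore $U_A\rel U_B$. Note this step does not use that $f$ is more concave than $g$.

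\emph{Weak risk-aversion ordering.} I must show that $u_A\circ u_B^{-1}$ is concave. Both $u_A$ and $u_B$ are strictly increasing as positive combinations of strictly increasing functions, so the composition is well defined.

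Write $f = h\circ g$ with $h$ increasing and concave, which is the hypothesis that $f$ is more concave than $g$. Substituting $t=g(x)$ reduces the problem to a one-variable statement. Set $\phi_c(t) = c\,h(t) + (1-c)t$, so that $u_c = \phi_c\circ g$ and
\[
u_A\circ u_B^{-1} = \phi_a\circ \phi_b^{-1}.
\]
Now write $\phi_a = \lambda \phi_b + \mu\,\mathrm{id}$ with $\lambda = a/b$ and $\mu = 1 - a/b$. This handles $b>0$; if $b=0$ then $\phi_b$ is the identity and the claim is immediate. Then
\[
\phi_a\circ\phi_b^{-1}(s) = \lambda s + \mu\,\phi_b^{-1}(s).
\]
Here $\phi_b^{-1}$ is convex, being the inverse of an increasing concave function, and $\mu<0$. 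So the composition is affine plus a negative multiple of a convex function, hence concave. This gives $U_A$ weakly more risk-averse than $U_B$.

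\emph{Strictness.} I also need that $U_B$ is not weakly more risk-averse than $U_A$. If it were, $\phi_a\circ\phi_b^{-1}$ would be both concave and convex, hence affine. Then $\phi_b^{-1}$ would be affine, so $h$ would be affine, so $f$ would be an affine transformation of $g$. The point requiring care is exactly this degenerate case. I would read "more concave" in the hypothesis as strict (not affinely equivalent), which the paper presumably intends; under that reading, strictness follows.

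I expect the main obstacle to be domain and regularity bookkeeping: images of $g$ and $\phi_b$ must be intervals, and $h$ need not be differentiable. I will avoid derivatives entirely and argue only with the convexity of inverses and the affine decomposition above.
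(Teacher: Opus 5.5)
Your proof is correct. For $U_A \rel U_B$ it is the paper's argument: since $a>b$, your $k=\max\{b/a,(1-b)/(1-a)\}$ equals $(1-b)/(1-a)$, so $k u_A-u_B=\frac{a-b}{1-a}f$, which is exactly the paper's computation.

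For the risk-aversion ordering you take a different route.

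\begin{itemize}
\item \emph{The paper's route.} It compares Arrow--Pratt coefficients, noting that $-u_A''/u_A'\ge -u_B''/u_B'$ is equivalent to $(a-b)\bigl(-f''/f'+g''/g'\bigr)\ge 0$. This tacitly assumes $f$ and $g$ are twice differentiable.
\item \emph{Your route.} You pass to $h=f\circ g^{-1}$ and use the identity $\phi_a=\frac{a}{b}\phi_b+\bigl(1-\frac{a}{b}\bigr)\mathrm{id}$, together with convexity of $\phi_b^{-1}$. This needs no differentiability. It therefore works under the paper's standing assumptions, where Bernoulli utilities are only strictly increasing, concave and continuous.
\end{itemize}

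Your decomposition also handles something the paper's argument leaves out. The paper's footnote establishes only the weak comparison, yet the paper's ``more risk-averse'' is strict. Your argument shows that strictness fails exactly when $h$ is affine. That is, it fails exactly when $f$ is a positive affine transformation of $g$, in which case $u_A$ and $u_B$ represent the same preference.

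So reading ``more concave'' as strict is not a gap in your proof. It is a needed correction to the statement as written. The $\rel$ conclusion holds either way.
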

These utilities are parametrized by the coefficient on the more risk-averse Bernoulli utility function $f$, i.e., the higher the coefficient, the greater the risk aversion.\footnote{Indeed, \[-\frac{u_A''(x)}{u_A'(x)} \ge -\frac{u_B''(x)}{u_B'(x)} \iff (a-b)\left(-\frac{f''(x)}{f'(x)}+\frac{g''(x)}{g'(x)}\right) \ge 0.\]  Since $f$ is more concave than $g$, it follows that $a \ge b$ is equivalent to $U_A$ being more risk-averse than $U_B$.}
Moreover, $\frac{1-b}{1-a}u_A-u_B=\frac{a-b}{1-a}f$ is concave since $a > b$ and $f$ is concave, so $u_A \rel u_B$.\footnote{Note that even if $f\norel g$, for any $a\in (0,1), af(x)+(1-a)g(x) \rel g(x)$. This can be understood geometrically: the condition $f \rel g$ means that the line segment between $g$ and $f$ can be extended past $f$ while staying within the set of concave functions. In particular this means $f \norel g$ implies that $f$ lies on the boundary of the set of concave functions.} 

 Families of preferences that can be represented as mixtures of two utility functions were axiomatized by \cite*{kartik2024single} using a single-crossing condition. When $f$ and $g$ are CARA (CRRA) utilities with different coefficients, $u_A$ and $u_B$ correspond to sum-ex (sum-power) utilities \cite[see, e.g.,][]{farquhar1987constant,bell1988one,pedersen2001characterisation}. \cref{rem:linex} thus suggests a replacement of the CARA utility family or CRRA utility family for estimation. One specifies an upper and lower bound on absolute risk aversion, corresponding to two CARA utilities, and then estimates the weight on each utility. The same exercise can be done with bounds on relative risk aversion and CRRA utilities. In \cref{ap:sum}, we discuss properties of sum-ex and sum-power utility functions and how to interpret the parameters. 

\section{Multinomial Choice}\label{sec:multi}
In this section, we extend our results to multinomial choice. Given a finite menu $M\subset \cL$, we denote by $\rho^M\in \Delta(M)$ the choice probabilities for the menu $M$. 

In order to study reversals in the multinomial setting, we extend our notion for binary menus to larger menus. %There are a number of ways to extend the notions of a reversal to multinomial choice.
Recall that in  binary choice, lotteries $S\sccv R$ generate a reversal if a more risk-averse individual, $A$, assigns a lower choice probability to $S$ than a less risk-averse individual, $B$. Consequently, in terms of the ex-ante compound lotteries generated by their stochastic choices, $A$ is exposed to greater risk than $B$. We formalize this interpretation in the multinomial setting as follows. 

\begin{definition}\label{def:compound}
    When $A$ is more risk-averse than $B$, we say that there is a \emph{compound reversal}
if there is a menu $M$ such that for all concave functions $u\colon \R \to \R$
\[
\sum_{X\in M} \E{u(X)}\rho_B^M(X) \ge \sum_{X\in M} \E{u(X)}\rho_A^M(X),  
\] with strict inequality for at least one such $u$.
\end{definition}
When a compound reversal occurs at $M$, the compound lottery resulting from $A$'s stochastic choice is dominated by that of $B$ in the concave order. For example, consider the menu $M=\{X,Y,Z\}$ with the following distributions:
\[X=
\begin{cases}
    0 & \textrm{w.p. } \frac{1}{6}\\
    50 & \textrm{w.p. } \frac{4}{6}\\
    100 & \textrm{w.p. } \frac{1}{6}
\end{cases}, \quad 
Y=\begin{cases}
    0 & \textrm{w.p. } \frac{1}{3}\\
    75 & \textrm{w.p. } \frac{2}{3}
\end{cases}, \quad 
Z=\begin{cases}
    25 & \textrm{w.p. } \frac{2}{3}\\
    100 & \textrm{w.p. } \frac{1}{3}
\end{cases}.\]

Note that while $X,Y,Z$ share a mean of $50$, none of them are comparable in the
concave order. However, it is easy to verify that $X \sccv Y_{0.5}Z$. Thus, if $A$ is more risk-averse than $B$, then $\rho_A^M(Y)=\rho_A^M(Z)=\frac{1}{2}$ and
$\rho_B^M(X)=1$ gives a compound reversal. 

In fact, there are many ways to have compound reversals in this menu. Any $\rho_A^M$ and $\rho_B^M$ such that   $\rho_A^M(X)<\rho_B^M(X)$ and $\rho_A^M(Y)-\rho_A^M(Z)=\rho_B^M(Y)-\rho_B^M(Z)$ will lead to compound reversals. Indeed, such compound reversals arise under various parameters of multinomial logit models with CARA utilities, e.g., under absolute risk coefficients $0.357$ and $0.024$ for $A$ and $B$ and common logit scale parameter $1$.

 As in the case of binary choice, when $U_A$ is more risk-averse than $U_B$ we say that $U_A,U_B$ \emph{imply compound reversals} in a framework of noise, if for every pair of admissible noises there are compound reversals.

We study compound reversals in the \emph{additive perturbed utility} (APU) framework of \cite{fudenberg2015stochastic}, which is a generalization of Fechnerian choice to the multinomial setting. Choice probabilities in a menu $M$ are given by  
\begin{align}\label{eq:APU}
\rho^M = \argmax_{p\in \Delta(M)} \sum_{X\in M}U(X)p(X)-c(p(X)), 
\end{align}
 where $U$ is a vNM utility function and $c:[0,1]\to \R\cup\{\infty\}$ is strictly convex  and continuously differentiable on $(0,1)$ and satisfies $\lim_{p\to 0}c'(p)=-\infty$. Any function $c$ with these properties is referred to as a \emph{cost function}. Both $U$ and $c$ are fixed across menus. We say that $\rho$ has an \emph{APU representation} $(U,c)$ if $\rho$ coincides with \eqref{eq:APU}.  We say that an APU model $(U,c)$ is \emph{smooth} if $c\in C^2_+$. One prominent example of smooth APU is multinomial logit, which arises under entropic cost $c(p)=\frac{1}{\beta}p\ln (p)$ and yields
 \[\rho^M(X)=\frac{\exp(\beta U(X))}{\sum_{Y \in M}\exp(\beta U(Y))}.\]
 
 Our next result shows that reversals occur in smooth APU models with arbitrarily heterogeneous cost functions unless the $\rel$ order obtains.

\begin{maintheorem}\label{th:apu_rel}
Let $U_A$ and $U_B$ be vNM utilities with $U_A$ more risk-averse. Then $U_A,U_B$ imply compound reversals in the smooth APU framework if and only if $U_A \norel U_B$.

\end{maintheorem}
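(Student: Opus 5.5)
\textbf{Sufficiency of $U_A \rel U_B$.} I would mimic the sufficiency argument for \cref{th:finalboss}, replacing the single-crossing of Fechnerian probabilities with the optimality conditions of the APU program. Suppose $ku_A-u_B$ is concave for some $k>0$. Fix any smooth cost $c$ and set $c_B=c$ and $c_A=c/k$. Then $\rho_A^M$ is also the unique maximizer of
\[
\sum_{X\in M} kU_A(X)p(X)-c(p(X)),
\]
so both individuals solve the same strictly concave program with utility vectors $V_A:=kU_A$ and $U_B$. Write $C(p)=\sum_{X\in M} c(p(X))$. Adding the two optimality inequalities
\[
V_A\cdot\rho_A^M-C(\rho_A^M)\ge V_A\cdot\rho_B^M-C(\rho_B^M),\qquad U_B\cdot\rho_B^M-C(\rho_B^M)\ge U_B\cdot\rho_A^M-C(\rho_A^M)
\]
yields
\[
\sum_{X\in M} W(X)\bigl(\rho_A^M(X)-\rho_B^M(X)\bigr)\ge 0,\qquad W:=kU_A-U_B.
\]
By uniqueness of the maximizers, this inequality is strict whenever $\rho_A^M\neq\rho_B^M$.

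Now suppose a compound reversal occurred at $M$. Then $\rho_A^M\ne\rho_B^M$, since otherwise no strict inequality in \cref{def:compound} is possible. Because $W=\E{ku_A-u_B}$ is the expectation of a concave function, the reversal gives $\sum_X W(X)\rho_B^M(X)\ge\sum_X W(X)\rho_A^M(X)$. This contradicts the strict inequality above, so these costs produce no compound reversals.

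\textbf{Necessity.} Assume $U_A\norel U_B$. As in the proof of necessity in \cref{th:finalboss}, this yields that the ratio $\frac{U_B(S)-U_B(R)}{U_A(S)-U_A(R)}$ is unbounded over pairs $S\sccv R$. I would restrict attention to binary menus $M=\{S,R\}$ and reduce to the FEU setting so that \cref{lem:ratio_implies} applies.

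\emph{Step 1: binary APU is Fechnerian.} For a binary menu, the first-order condition for $p=\rho^M(S)$ is
\[
U(S)-U(R)=c'(p)-c'(1-p)=:G(p).
\]
Here $G$ is continuous and strictly increasing on $(0,1)$, with $G(1/2)=0$ and $G'(p)=c''(p)+c''(1-p)>0$. The condition $\lim_{p\to0}c'(p)=-\infty$ ensures an interior solution. Hence $\rho^M(S)=F(U(S)-U(R))$ with $F=G^{-1}$ on the range of $G$. This $F$ satisfies $F(t)+F(-t)=1$ and is $C^1$ with positive derivative; in particular $F'(0)=1/(2c''(1/2))>0$. I would check that this is all the proof of \cref{lem:ratio_implies} uses, given the paper's footnote that positivity and differentiability are only needed at $0$. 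If $G$ has bounded range, $F$ is defined only on a bounded interval, but the lemma's argument works with arbitrarily small utility differences, so this should not matter.

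\emph{Step 2: binary reversals are compound reversals.} If $s_A:=\rho_A^M(S)<s_B:=\rho_B^M(S)$, then for every concave $u$,
\[
\text{(B's value)}-\text{(A's value)}=(s_B-s_A)\bigl(\E{u(S)}-\E{u(R)}\bigr)\ge 0.
\]
The inequality is strict for some concave $u$ because $S\sccv R$. So any binary reversal delivered by \cref{lem:ratio_implies} is a compound reversal.

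\emph{Main obstacle.} I expect the delicate point to be Step 1. The induced $F$ may fail to be globally of the form assumed in $\cF$, so I would re-check that the local affine approximation of $F_A,F_B$ at $0$ in \cref{lem:ratio_implies} suffices. That lemma takes $S$ and $R_\lambda S$ with $\lambda\to0$, so the utility differences shrink to $0$ while the ratio stays at least $M$. Choosing $M>F_A'(0)/F_B'(0)$ then gives the reversal near $0$.
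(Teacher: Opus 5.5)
Your proposal is correct and matches the paper's proof. Sufficiency sets $c_A=c/k$, $c_B=c$ and sums the two optimality inequalities (the paper's Lemma~\ref{th:apu}). Necessity reduces binary APU menus to FEU (Lemma~\ref{lem:apu}), applies the unbounded-ratio argument, and notes that binary reversals are compound reversals; you cite Lemma~\ref{lem:ratio_implies} directly where the paper cites Theorem~\ref{th:finalboss}, which is only a cosmetic difference.

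The obstacle you flag in Step 1 does not arise. Since $c'$ is increasing and $c'(p)\to-\infty$ as $p\to 0$, one has $G(p)\le c'(p)-c'(1/2)\to-\infty$, and symmetrically $G\to+\infty$ as $p\to 1$. So $F=G^{-1}$ is defined on all of $\R$ and belongs to $\cF$ outright.
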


\cref{th:apu_rel} is proved in \cref{ap:APU}.

Given that APU is an extension of the FEU (see \cref{lem:apu}), we already know from \cref{th:finalboss} that reversals will necessarily occur even under heterogeneous costs if the $\rel$ order does not obtain.\footnote{As the multinomial logit example above demonstrates, compound reversals are not limited to binary choice when $\rel$ fails.} The message of this section is thus that when $\rel$ does obtain, reversals can be avoided even for multinomial choice. 

It is interesting to note that random parameter models do not preclude compound reversals in the multinomial setting; see \cref{app:cara_compound} for an example. We discuss this in depth in  \cite{sung2026induced}. Although compound reversals may be less of a concern than binary-choice reversals, they help clarify the extent to which random-parameter specifications preserve comparative statics in stochastic choice.

\section{Conclusion}
This paper establishes that the paradox long noted for CARA and CRRA utilities under homoskedastic Fechnerian noise is not an artifact of restrictive parametric assumptions, but instead reflects a deeper incompatibility between these utility forms and noisy choice. We show that these utilities imply reversals under highly flexible noise specifications---including heterogeneous noise across individuals and menu-dependent noise. 

We obtain a simple characterization of whether two utility functions can yield sensible predictions in the presence of noise. 
This condition rules out even the more general expo-power utilities and HARA utilities, and points to new parametric families that restore intuitive comparative statics. These families, which include equicautious HARA utility, sum-ex and sum-power utilities, provide empirically tractable, well-behaved alternatives for measuring risk preferences in noisy environments. Moreover, we show that these well-behaved alternatives yield monotone comparative statics even beyond binary choice.

We conclude by noting that the continuity requirement on noise assignments may be restrictive in some settings. For example, discontinuous noise assignments may capture interesting features of binary comparisons, such as salience or complexity. Understanding which discontinuous models generate reversals thus remains an interesting direction for future research.

\bibliography{refs}
\appendix
\section{Proof of \texorpdfstring{\Cref{th:main}}{Theorem 1}}\label{ap:th:main}
We start by proving \cref{lem:infty_ratio}, which shows that \[\frac{U_B(S)-U_B(R)}{U_A(S)-U_A(R)}\] is unbounded over the set of pairs $S\sccv R$. 
\begin{proof}[Proof of \Cref{lem:infty_ratio}]
Let $S \sccv R$ and $a, b>0$. For CARA, we have
\begin{align*}
    \frac{\cara_b(S+x)-\cara_b(R+x)}{\cara_a(S+x)-\cara_a(R+x)}
    %&=\frac{a}{b}\dfrac{\E{\ee^{-b(R+x)}}-\E{\ee^{-b(S+x)}}}{\E{\ee^{-a(R+x)}}-\E{\ee^{-a(S+x)}}}\\
    &=\frac{a}{b} \left(\dfrac{\E{\ee^{-bR}}-
\E{\ee^{-bS}}}{\E{\ee^{-aR}}-\E{\ee^{-aS}}}\right)\ee^{(a-b)x}.
\end{align*} 
% Next, we show the analogous result for CRRA utility when stakes are scaled up. 
For CRRA, when $a$ and $b$ are not equal to 1, we have
\begin{align*}
    \frac{\crra_b(k\cdot S)-\crra_b(k \cdot R)}{\crra_a(k\cdot S)-\crra_a(k \cdot R)}
    %&=\frac{1-a}{1-b}\left(\frac{\E{S^{1-b}}-\E{R^{1-b}}}{\E{S^{1-a}}-\E{R^{1-a}}}\right)\frac{k^{1-b}}{k^{1-a}}\\
    &= \frac{1-a}{1-b}\left(\frac{\E{S^{1-b}}-\E{R^{1-b}}}{\E{S^{1-a}}-\E{R^{1-a}}}\right)k^{a-b}.
\end{align*}
Since positive risk coefficients $a$ and $b$ for CARA/CRRA utilities correspond to risk-averse utilities, 
\begin{align}\label{eq:critical}
    C_1=\frac{a}{b} \left(\dfrac{\E{\ee^{-bR}}-
\E{\ee^{-bS}}}{\E{\ee^{-aR}}-\E{\ee^{-aS}}}\right), \quad C_2=\frac{1-a}{1-b}\left(\dfrac{\E{S^{1-b}}-\E{R^{1-b}}}{\E{S^{1-a}}-\E{R^{1-a}}}\right)
\end{align} are positive. For $a$ or $b$ equal to 1, the expression for $C_2$ is given by the corresponding limit as $a$ or $b$ tends to 1. 
\end{proof}

\cref{lem:infty_ratio} provides sequences of lottery pairs $(S_n,R_n)$ along which the ratio of expected-utility differences $(U_B(S_n)-U_B(R_n))/(U_A(S_n)-U_A(R_n))$ diverges, for CARA and CRRA utilities. \cref{lem:ratio_implies} establishes that when such a sequence exists there will be reversals. An important subtlety of \cref{lem:ratio_implies} is that the reversal-generating lotteries may not be part of this sequence. Indeed, \cref{lem:ratio_implies} relies on the following lemma, which establishes that under the additional condition that utility differences tend to zero, there will be lotteries in the sequence that generate reversals.

\begin{lemma}\label{lem:par_seq}
    Let $F_A,F_B\in \cF$ and let $U_A$ and $U_B$ be risk-averse vNM utility functions with $U_A$ more risk-averse. There exist $M\in \R$ and $\delta>0$ such that all lotteries $S\sccv R$ satisfying 
    \begin{align*}%\label{eq:ratio2}
         \frac{U_B(S)-U_B(R)}{U_A(S)-U_A(R)}\ge M \quad \text{and} \quad U_i(S)-U_i(R)<\delta
    \end{align*} for $i=A,B$ generate a reversal.
\end{lemma}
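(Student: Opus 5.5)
The approach is to linearize $F_A$ and $F_B$ around $0$. Both lie in $\cF$, so they are $C^1$ with $F_A'(0),F_B'(0)>0$ and $F_A(0)=F_B(0)=\tfrac12$. Write $\Delta_i=U_i(S)-U_i(R)$ for $i=A,B$. Since $S\sccv R$ and both utilities are concave, $\Delta_A,\Delta_B\ge 0$.

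We also want $\Delta_A>0$, so that the ratio in the statement is well defined and strictly positive. If the hypothesis is read as requiring the ratio to be a finite number at least $M$, then $\Delta_A>0$ is implicit. Otherwise we note that $\Delta_A=0$ with $\Delta_B>0$ would immediately give $\rho_A(S,R)=\tfrac12<F_B(\Delta_B)=\rho_B(S,R)$, which is already a reversal. So we may assume $\Delta_A>0$, and then $\Delta_B\ge M\Delta_A>0$.

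\textbf{Step 1: choose $\delta$.} Continuity of $F_A'$ and $F_B'$ at $0$ gives a $\delta>0$ with
\[
F_A'(t)\le 2F_A'(0)\quad\text{and}\quad F_B'(t)\ge \tfrac12 F_B'(0)\qquad\text{for all } t\in[0,\delta).
\]

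\textbf{Step 2: mean value bounds.} For $0<\Delta_A,\Delta_B<\delta$ the mean value theorem gives
\[
\rho_A(S,R)-\tfrac12=F_A(\Delta_A)-F_A(0)\le 2F_A'(0)\,\Delta_A,
\qquad
\rho_B(S,R)-\tfrac12\ge \tfrac12 F_B'(0)\,\Delta_B.
\]

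\textbf{Step 3: choose $M$.} Set $M>4F_A'(0)/F_B'(0)$. Then $\Delta_B\ge M\Delta_A$ yields
\[
\rho_B(S,R)-\tfrac12\;\ge\; \tfrac12F_B'(0)\,M\Delta_A\;>\;2F_A'(0)\,\Delta_A\;\ge\;\rho_A(S,R)-\tfrac12 .
\]
The middle inequality is strict because $\Delta_A>0$. Hence $\rho_A(S,R)<\rho_B(S,R)$, which is a reversal.

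The argument is essentially routine. The only delicate point is the degenerate case $\Delta_A=0$, handled above so that the strict inequality goes through. The reason $\delta$ and $M$ can be chosen uniformly over all lottery pairs is that they depend only on $F_A$ and $F_B$ through their behavior near $0$, not on $S$ and $R$.
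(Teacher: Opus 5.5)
Your proof is correct and is essentially the paper's argument: both use continuity of $F_A'$ and $F_B'$ at $0$ to bound $F_A$ above and $F_B$ below by lines through $(0,\tfrac12)$ on a common interval $(0,\delta)$, and take $M$ to be (a multiple of) the ratio of the two slopes. Your explicit constants $2F_A'(0)$ and $\tfrac12F_B'(0)$ play the role of the paper's $\varepsilon_A>f_A(0)$ and $\varepsilon_B<f_B(0)$, and your separate treatment of $\Delta_A=0$ is a harmless addition, since the paper simply takes the ratio to be well defined (and under the Arrow--Pratt ordering $\Delta_A=0$ in fact forces $\Delta_B=0$, so that case never arises).
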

\begin{proof}[Proof of \cref{lem:par_seq}]
   Let $f_A(t)=\frac{d}{dt}F_A(t)$ and $f_B(t)=\frac{d}{dt}F_B(t)$. Since $F_A$ is continuously differentiable, $f_A$ is continuous at $0$ so that for each $\varepsilon_A>f_A(0)$ there is $\delta_A>0$ such that for each $t \in (0,\delta_A)$, $f_A(t) <\varepsilon_A$ and $F_A(t)<\frac{1}{2}+t\varepsilon_A $. Since $F_B$ is continuously differentiable and $f_B > 0$, for each $0 < \varepsilon_B < f_B(0)$, there exists $\delta_B>0$ such that for any $t\in (0,\delta_B)$, $f_B(t)> \varepsilon_B$ and $F_B(t) > \frac{1}{2}+t \varepsilon_B $. Let $\delta=\min\{\delta_A,\delta_B\}$ and let $M=\varepsilon_A/\varepsilon_B$. 

    Let $S$ and $R$ as in the statement of the lemma. Then \begin{align*}
        F_B(U_B(S)-U_B(R)) &> \frac{1}{2}+(U_B(S)-U_B(R)) \varepsilon_B \\
        &\ge \frac{1}{2}+(U_A(S)-U_A(R)) \varepsilon_A\\
        &> F_A(U_A(S)-U_A(R)).
    \end{align*}
\end{proof}

In light of \cref{lem:par_seq}, 
\cref{lem:ratio_implies} follows from the observation that for any lotteries $S\sccv R$ and $\varepsilon>0$, there exist lotteries $S'\sccv R'$ with the same utility difference ratio such that each utility difference is less than $\varepsilon$. 
\begin{proof}[Proof of \Cref{lem:ratio_implies}]
Let $F_A,F_B\in \cF$ and let $U_A$ and $U_B$ be risk-averse vNM utility functions with $U_A$ more risk-averse. By \cref{lem:par_seq}, there exist $M\in \R$ and $\delta>0$ such that all lotteries $S\sccv R$ satisfying \begin{align*}
         \frac{U_B(S)-U_B(R)}{U_A(S)-U_A(R)}\ge M \quad \text{and} \quad U_i(S)-U_i(R)<\delta
    \end{align*} for $i=A,B$ generate a reversal. By hypothesis, there exist lotteries $S\sccv R$ satisfying the first inequality.

    %The right hand side is positive, and the numerator and denominator of the left hand side are non-negative since $U$ and $V$ are concave. 

For $\lambda \in (0,1)$, let $R_{\lambda}S$ denote a lottery distributed as a compound lottery that yields $R$ with probability $\lambda$ and yields $S$ with probability $1-\lambda$.

Then $S \sccv R_{\lambda}S$ and 
\[ \frac{U_B(S)-U_B(R_{\lambda}S)}{U_A(S)-U_A(R_{\lambda}S)}=\frac{U_B(S)-U_B(R)}{U_A(S)-U_A(R)},\] by the linearity of expected utility, i.e., $U(R_{\lambda}S)=\lambda U(R)+(1-\lambda)U(S)$ for any vNM utility $U$. Moreover, as $\lambda$ tends to zero, $U_i(S)-U_i(R_{\lambda}S)$ for $i=A,B$ tend to zero. Hence, $S$ and $R_{\lambda}S$ generate a reversal for all $\lambda$ small enough.

\end{proof}

\section{Proof of \texorpdfstring{\Cref{prop:stakes}}{Proposition 2}}\label{sec:prop_cara}

\cref{prop:stakes} shows how to construct the reversal-generating lotteries of \cref{th:main} by generalizing \cref{rem:scale} to arbitrarily different noise structures. The CARA case relies on the following lemma about diminishing utility differences as background wealth increases.

\begin{lemma}\label{lm:diff_zero}
Let $S \ccv R$. Then 
\[\lim_{x \to \infty}\cara_a(S+x)-\cara_a(R+x)=0\] for all $a > 0$.
\end{lemma}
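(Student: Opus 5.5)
The plan is to write the difference explicitly and factor out the dependence on $x$. Since $\scara_a(y)=(1-\ee^{-ay})/a$, linearity of expectation gives
\[
\cara_a(S+x)-\cara_a(R+x)=\frac{1}{a}\left(\E{\ee^{-a(R+x)}}-\E{\ee^{-a(S+x)}}\right)=\frac{\ee^{-ax}}{a}\left(\E{\ee^{-aR}}-\E{\ee^{-aS}}\right).
\]
This is the same computation used in the proof of \cref{lem:infty_ratio}. Everything then reduces to showing that the bracketed term is a finite constant that does not depend on $x$.

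Finiteness follows from boundedness: elements of $\cL$ are bounded random variables, so $\ee^{-aS}$ and $\ee^{-aR}$ are bounded. Hence $C:=\E{\ee^{-aR}}-\E{\ee^{-aS}}$ is a finite real number. Because $S\ccv R$ and $y\mapsto-\ee^{-ay}$ is concave, we also have $C\ge 0$. The sign is not needed for the limit, but it confirms that the difference is nonnegative.

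With this, the difference equals $(C/a)\,\ee^{-ax}$ for a fixed $C$. Since $a>0$, the factor $\ee^{-ax}\to0$ as $x\to\infty$, which proves the claim.

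I do not expect a real obstacle here. The only point to check is that the shift $x$ factors cleanly out of the expectation, $\E{\ee^{-a(X+x)}}=\ee^{-ax}\E{\ee^{-aX}}$, together with the finiteness of these expectations, which the boundedness of lotteries guarantees. If one preferred to avoid the explicit formula, an alternative route would be dominated convergence applied to $|\scara_a(S+x)-\scara_a(R+x)|\le \ee^{-ax}(\ee^{-a\inf S}+\ee^{-a\inf R})/a$, but the factorization above is more direct.
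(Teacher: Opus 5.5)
Your proof is correct, and it takes a different and shorter route than the paper's.

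You use the multiplicative structure $\ee^{-a(y+x)}=\ee^{-ax}\ee^{-ay}$ to get the exact identity $\cara_a(S+x)-\cara_a(R+x)=\frac{\ee^{-ax}}{a}\bigl(\E{\ee^{-aR}}-\E{\ee^{-aS}}\bigr)$. This is the same computation as in the proof of \cref{lem:infty_ratio}. Boundedness then makes the bracket a finite constant, so the difference vanishes at the explicit rate $\ee^{-ax}$. Your argument never uses $S\ccv R$, so it holds for any two bounded lotteries.

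The paper avoids the closed form. It normalizes $\E{S}=\E{R}=0$, so that both lotteries are compared with the same sure amount $x$, and lets $m\le 0$ be the essential infimum of $S$. Monotonicity, Jensen, and concavity then give the sandwich $0\le \scara_a(x)-\cara_a(S+x)\le \scara_a(x)-\scara_a(m+x)\le \ee^{-a(m+x)}\,|m|$. It does the same for $R$ and subtracts.

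What the paper's sandwich buys is generality. It uses only concavity and vanishing marginal utility, so it carries over to any increasing concave Bernoulli utility with $u'(y)\to 0$, not just CARA. What your factorization buys is directness: for the lemma as stated, it is the cleaner proof and gives the exact decay rate.
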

\begin{proof}
Since $S$ and $R$ have equal means, we suppose, without loss of generality, that $\E{S}=\E{R}=0$. Recall that $\scara_a(x)= \frac{1-\ee^{-ax}}{a}$ denotes the CARA Bernoulli utility under coefficient $a$, and let $m$ denote the essential infimum of $S$.\footnote{This is the largest value that $S$ exceeds with probability 1.} By monotonicity, $\scara_a(x)-\scara_a(m+x)\geq \scara_a(x)-\cara_a(S+x)$ and by concavity of $u_a$, since $\E{S}=0$, $\scara_a(x)-\cara_a(S+x) \geq 0.$  Moreover, since $m \le 0$ and $\scara_a$ is concave,
\begin{equation*}\label{eq:sand2}
\frac{d}{dx}\scara_a (m + x)\cdot|m|\geq \scara_a(x) - \scara_a(m + x).
\end{equation*}

Note that $\frac{d}{d x}\scara_a (m + x)=\ee^{-a(m +x)}$. Thus
\begin{equation*}
    \ee^{-a(m +x)}\cdot |m| \ge \scara_a(x)-\scara_a ( m + x) \ge \scara_a(x)- \cara_a(S+x)\ge 0.
\end{equation*}
Taking the limit as $x\to \infty$, we see that $\scara_a(x)-\cara_a ( S + x)\to 0$. 
 Since $S$ was an arbitrary mean-zero lottery and $\E{R}=0$, we have  \[\lim_{x \to \infty} \scara_a(x)-\cara_a(R+x)=0\]  as well, concluding the proof. 
\end{proof}
%Lemma~\ref{lm:diff_zero} applies to any family of concave and monotone utility functions whose marginal utilities converge to $0$ as $x$ goes to infinity.

%\Cref{prop:stakes} now follows from \cref{lem:infty_ratio,lem:par_seq,lm:diff_zero}.
\begin{proof}[Proof of \Cref{prop:stakes}]
    For the case of CARA utilities, \cref{lem:infty_ratio} establishes that for any pair of lotteries $S\sccv R$, the ratio
    \[
    \frac{U_B(S+x)-U_B(R+x)}{U_A(S+x)-U_A(R+x)}
    \] tends to infinity with $x$, while \cref{lm:diff_zero} establishes that the numerator and denominator go to zero. Hence, the result follows from \cref{lem:par_seq}. 

    For the CRRA case, \cref{lem:infty_ratio} establishes that for any pair of lotteries $S\sccv R$, the ratio
    \[
    \frac{U_B(k\cdot S)-U_B(k\cdot R)}{U_A(k\cdot S)-U_A(k\cdot R)}
    \] tends to infinity with $k$. As in the proof of \cref{lem:ratio_implies}, 
    \[
    \frac{U_B(k\cdot S)-U_B(k\cdot R_{\lambda}S)}{U_A(k\cdot S)-U_A(k\cdot R_{\lambda}S)}=\frac{U_B(k\cdot S)-U_B(k\cdot R)}{U_A(k\cdot S)-U_A(k\cdot R)},
    \] and as $\lambda$ tends to zero, $U_i(k\cdot S)-U_i(k\cdot R_{\lambda}S)$ tends to zero for $i=A,B$, and the result follows from \cref{lem:par_seq}. 
\end{proof}

\section{Proof of \texorpdfstring{\Cref{th:finalboss}}{Theorem 3}}\label{ap:menu}
\Cref{th:finalboss} provides a characterization of vNM utilities that lead to consistent predictions under some noise structures.  The proof of \Cref{th:finalboss} makes use of %\cref{prop:mon} along with 
the following lemmas.

Given two increasing and concave Bernoulli utility functions $u_A$ and $u_B$, we say that $u_A$ is \emph{absolutely more concave than} $u_B$, and write $u_A \abs u_B$ if $u_A-u_B$ is concave.\footnote{Unlike $\rel$, the relation $\abs$ is not invariant to positive affine transformations and thus does not define a relation on vNM preferences.} We extend this notion to vNM utilities $U_A=\E{u_A}$ and $U_B=\E{u_B}$, writing
$U_A \abs U_B$ if $u_A \abs u_B$. The following lemma states that this absolute comparative concavity notion characterizes when reversals do not occur under identical Fechnerian noise.

\begin{lemma}\label{prop:mon}
Let $F \in \cF$ and let $U_A$ more risk-averse than $U_B$. Then there are no reversals under the Fechnerian models $(U_A,F)$ and $(U_B,F)$ if and only if  $U_A\abs U_B$.
\end{lemma}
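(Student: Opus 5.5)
Sufficiency of $U_A\abs U_B$ is the easy direction. If $u_A-u_B$ is concave, then for every $S\sccv R$ the definition of the concave order gives $\E{u_A(S)-u_B(S)}\ge \E{u_A(R)-u_B(R)}$, that is,
\[
U_A(S)-U_A(R)\ge U_B(S)-U_B(R).
\]
Since $F$ is strictly increasing, it follows that $\rho_A(S,R)=F(U_A(S)-U_A(R))\ge F(U_B(S)-U_B(R))=\rho_B(S,R)$, so there are no reversals. This mirrors the sufficiency argument already given after \cref{th:finalboss}, now with $k=1$.

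For necessity, suppose $u_A-u_B$ is not concave; we construct $S\sccv R$ with $U_B(S)-U_B(R)>U_A(S)-U_A(R)$, which gives a reversal immediately because $F$ is strictly increasing. Set $h=u_A-u_B$. It is continuous (both $u_A$ and $u_B$ are concave, hence continuous) and not concave. A continuous function that fails to be concave admits points $x<y$ and $\lambda\in(0,1)$ with
\[
h(\lambda x+(1-\lambda)y)<\lambda h(x)+(1-\lambda)h(y),
\]
a strict midpoint-type violation. Take $S$ to be the degenerate lottery at $z=\lambda x+(1-\lambda)y$, and take $R$ to equal $x$ with probability $\lambda$ and $y$ with probability $1-\lambda$. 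Both are available in $\cL$ because the probability space is non-atomic. Then $R$ is a mean-preserving spread of $S$, so $S\ccv R$; and $S\neq R$ in distribution since $x<y$, so $S\sccv R$ (any strictly concave $g$ separates them). The violation reads $\E{h(S)}<\E{h(R)}$, i.e.\ $U_A(S)-U_A(R)<U_B(S)-U_B(R)$. Hence
\[
\rho_A(S,R)=F(U_A(S)-U_A(R))<F(U_B(S)-U_B(R))=\rho_B(S,R),
\]
which is a reversal.

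The only delicate step is the passage from ``$h$ is not concave'' to a strict two-point violation, but this is just the definition of concavity negated, so it causes no real trouble. The hypothesis that $A$ is more risk-averse is not used beyond making the notion of a reversal meaningful, and in the CRRA-type setting the argument works verbatim on the domain $(0,\infty)$.
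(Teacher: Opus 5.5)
Your proof is correct and follows the paper's argument essentially verbatim. Sufficiency comes from the definition of the concave order applied to $h=u_A-u_B$ together with the monotonicity of $F$, and necessity comes from a two-point violation of concavity of $h$, realized by a degenerate $S$ and a two-point mean-preserving spread $R$ (your explicit check that $S\sccv R$ is strict is a detail the paper leaves implicit).
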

\begin{proof}[Proof of \Cref{prop:mon}]

Since Fechnerian noise structure $F \in \cF$ is strictly increasing, the choice probability of $S$ is higher for $A$ than $B$ if and only if the utility difference is higher for $A$ than $B$, i.e., 
\begin{align*}
    U_A(S)-U_A(R) \ge U_B(S)-U_B(R),
\end{align*} or equivalently, 
\begin{align}\label{eq:diff}
    U_A(S)-U_B(S) \ge U_A(R)-U_B(R).
\end{align} 
We now show that this inequality holds for all lotteries $S\ccv R$ if and only if $U_A \abs U_B$. 

Let $h=u_A-u_B$. If $h$ is concave, then \eqref{eq:diff} is satisfied by the definition of the concave order. Conversely, for the sake of contradiction, suppose that $h$ is not concave. Then there are $x,y\in \R$ and $\lambda\in (0,1)$ such that \[h(\lambda x+(1-\lambda) y)<\lambda h(x)+(1-\lambda)h(y).\] Let $S$ denote a degenerate lottery that always pays $\lambda x+(1-\lambda) y$, and let $R$ denote a lottery that pays $x$ with probability $\lambda$ and $y$ with complementary probability. Then 
\[
    U_A(S)-U_B(S) =h(\lambda x+(1-\lambda) y)<\lambda h(x)+(1-\lambda)h(y)= U_A(R)-U_B(R),
\] contradicting \eqref{eq:diff}. Thus $u_A-u_B$ is concave, meaning $U_A\abs U_B$.
\end{proof}

Recall that in a WEU model $(U,F,\sigma)$, the probability of choosing $X$ over $Y$ is given by  \[F\left(\frac{U(X)-U(Y)}{\sigma(X,Y)}\right).\] Let $f(t)=\frac{d}{dt}F(t)$. Note that $f$ is continuous since $F \in \cF$. In the following lemma, we establish compact convergence of $(\frac{1}{\sigma_n}f(\frac{t}{\sigma_n}))_n$ when $\sigma_n \to \sigma$. For notational convenience, we let 
\[
f_{\sigma}(t)=\frac{1}{\sigma}\,f\!\left(\frac{t}{\sigma}\right).
\]
\begin{lemma}\label{lem:uniform}
If $\sigma_n \to \sigma$, then $(f_{\sigma_n})_n$ converges compactly to $f_{\sigma}$.
\end{lemma}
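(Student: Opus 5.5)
We need to show: if $\sigma_n \to \sigma>0$, then $f_{\sigma_n}\to f_\sigma$ uniformly on every compact set $K\subset\R$. The plan is to split the difference into a scaling term and a continuity term, and control each with the continuity of $f$ on a suitable compact set.

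First, since $\sigma>0$ and $\sigma_n\to\sigma$, we may discard finitely many terms and assume $\sigma_n\in[\sigma/2,2\sigma]$ for all $n$. Fix a compact $K\subset[-L,L]$. For $t\in K$ the arguments $t/\sigma_n$ and $t/\sigma$ then all lie in the compact interval $J=[-2L/\sigma,2L/\sigma]$. Because $F\in\cF$ is continuously differentiable, $f$ is continuous. Hence $f$ is bounded on $J$, say by $C$, and uniformly continuous there.

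Next we decompose, for $t\in K$,
\[
|f_{\sigma_n}(t)-f_\sigma(t)|\le \left|\tfrac{1}{\sigma_n}-\tfrac{1}{\sigma}\right| \left|f\!\left(\tfrac{t}{\sigma_n}\right)\right| + \tfrac{1}{\sigma}\left|f\!\left(\tfrac{t}{\sigma_n}\right)-f\!\left(\tfrac{t}{\sigma}\right)\right|.
\]
The first term is at most $C\,|1/\sigma_n-1/\sigma|\to 0$, uniformly in $t$. For the second term, note that
\[
\left|\tfrac{t}{\sigma_n}-\tfrac{t}{\sigma}\right|\le L\,\left|\tfrac{1}{\sigma_n}-\tfrac{1}{\sigma}\right|\to 0
\]
uniformly on $K$. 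Uniform continuity of $f$ on $J$ then makes this term vanish uniformly as well.

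Combining the two bounds gives $\sup_{t\in K}|f_{\sigma_n}(t)-f_\sigma(t)|\to 0$, which is compact convergence. The only delicate point is keeping $\sigma_n$ bounded away from $0$. This is what confines all the arguments to a single compact set $J$ on which $f$ is bounded and uniformly continuous, and it holds because the scale parameters in a WEU model take values in $(0,\infty)$.
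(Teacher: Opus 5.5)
Your proof is correct and follows essentially the same argument as the paper. You use the same add-and-subtract decomposition, bounding the first term via boundedness of $f$ on a compact set and the second via uniform continuity there; your restriction $\sigma_n\in[\sigma/2,2\sigma]$ simply makes explicit the compact set whose existence the paper asserts.
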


\begin{proof}[Proof of \Cref{lem:uniform}]
Let $K\subset \R$ be compact. 
\begin{align*}
    \sup_{t \in K}\left|f_{\sigma_n}(t)-f_{\sigma}(t)\right|  &=\sup_{t \in K} \left|\frac{1}{\sigma_n}f\left(\frac{t}{\sigma_n}\right)-\frac{1}{\sigma}f\left(\frac{t}{\sigma}\right)\right|\\
    &=\sup_{t \in K} \left|\frac{1}{\sigma_n}f\left(\frac{t}{\sigma_n}\right)-\frac{1}{\sigma}f\left(\frac{t}{\sigma_n}\right)+\frac{1}{\sigma}f\left(\frac{t}{\sigma_n}\right)-\frac{1}{\sigma}f\left(\frac{t}{\sigma}\right)\right|\\
    &\le \left|\frac{1}{\sigma_n}-\frac{1}{\sigma}\right|\sup_{t \in K}\left|f\left(\frac{t}{\sigma_n}\right)\right|+\frac{1}{\sigma}\sup_{t \in K}\left|f\left(\frac{t}{\sigma_n}\right)-f\left(\frac{t}{\sigma}\right)\right|
\end{align*}
Note that since $\sigma_n \to \sigma$, there is a compact set $S$ such that $\frac{t}{\sigma_n}\in S$ for all $n$ high enough and $t\in K$. Since $f$ is bounded on $S$, the first term tends to $0$, and since $\frac{t}{\sigma_n}$ converges to $\frac{t}{\sigma}$ uniformly on $K$, and $f$ is uniformly continuous on $S$, the second term tends to $0$ as well.
\end{proof}

\begin{lemma}\label{lem:ratio_implies_MD}
    Let $U_A$ and $U_B$ be vNM utility functions with $U_A$ more risk-averse. Suppose that for each $M\in \R$ there exist lotteries $S\sccv R$ such that 
    \begin{align*}
         \frac{U_B(S)-U_B(R)}{U_A(S)-U_A(R)}\ge M.
    \end{align*}
  Then $U_A,U_B$ imply reversals in the continuous WEU framework.
\end{lemma}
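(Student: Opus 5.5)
Fix continuous WEU models $(U_A,F_A,\sigma_A)$ and $(U_B,F_B,\sigma_B)$. The goal is a pair $S'\sccv R'$ with $\rho_A(S',R')<\rho_B(S',R')$. The plan mirrors the proof of \Cref{lem:ratio_implies}. Start from a pair with a large utility-difference ratio and mix the risky lottery toward the safe one, $R_\lambda S$. This keeps the ratio fixed while driving the utility differences to zero. The new ingredient is that the scale parameters now vary with the menu. Continuity of $\sigma_A,\sigma_B$ only controls them near a limiting menu, but the limiting menu $(S,S)$ depends on the pair chosen. So I will first fix a reference pair and then build all the needed pairs near it.

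\textbf{Step 1: reduce to local behaviour at one diagonal menu.} Fix any lottery $S_0$, for instance a degenerate one. Put $\bar\sigma_i=\sigma_i(S_0,S_0)>0$ and consider the Fechnerian structures $G_i(t)=F_i(t/\bar\sigma_i)$, which lie in $\cF$. Mimicking \cref{lem:par_seq}, and using \cref{lem:uniform} to get uniform control of the derivatives $f_{\sigma}$ near $t=0$ for $\sigma$ close to $\bar\sigma_i$, I want $\eta>0$, $\delta>0$ and $\varepsilon_A,\varepsilon_B>0$ with the following property. Whenever $|\sigma_i(X,Y)-\bar\sigma_i|<\eta$ for both $i$ and $0<t_i<\delta$, we have
\[
F_B\!\left(\frac{t_B}{\sigma_B(X,Y)}\right)>\tfrac12+\varepsilon_B t_B,\qquad F_A\!\left(\frac{t_A}{\sigma_A(X,Y)}\right)<\tfrac12+\varepsilon_A t_A .
\]
This follows from the mean value theorem together with the facts that $f$ is continuous and positive at $0$ and that $\sigma$ ranges over a compact interval bounded away from $0$. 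Set $M=\varepsilon_A/\varepsilon_B$. Any pair near $(S_0,S_0)$ with ratio at least $M$ and small utility differences then produces a reversal, by the same chain of inequalities as in \cref{lem:par_seq}.

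\textbf{Step 2: build such pairs near $(S_0,S_0)$.} This is the main obstacle. By hypothesis there is a pair $S\sccv R$ with ratio at least $M$, but it may lie far from $S_0$. I plan to compound: let $S'$ pay $S$ on an event $E$ of probability $\lambda$ and $S_0$ on $E^c$, and let $R'$ pay $R$ on $E$ and $S_0$ on $E^c$. Here $E$ is chosen independent of $S,R,S_0$, using non-atomicity after passing to equally distributed copies if necessary.

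These lotteries have the properties needed. First, $S'\sccv R'$, because conditioning on $E$ preserves the mean-preserving spread structure. Second, by linearity $U_i(S')-U_i(R')=\lambda\,(U_i(S)-U_i(R))$, so the ratio is unchanged and the differences vanish as $\lambda\to0$. Third, $S'\to S_0$ and $R'\to S_0$ in the paper's topology: for every continuous $u$, $\E{|u(S')-u(S_0)|}\le \lambda\,\E{|u(S)-u(S_0)|}\to0$, since all the lotteries are bounded.

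Hence $\sigma_i(S',R')\to\bar\sigma_i$ by continuity. For $\lambda$ small, both the closeness condition on the scale parameters and the condition $t_i<\delta$ hold, and Step 1 gives $\rho_A(S',R')<\rho_B(S',R')$. The delicate points are two. One is realizing the compound lotteries as random variables on the fixed space $\Omega$; this is possible because the space is non-atomic. The other is checking that the convergence holds in the strong topology, not merely in distribution. The bound above handles the latter, since $\sigma$ is allowed to depend on joint distributions.
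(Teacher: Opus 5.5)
Your proposal is correct and is essentially the paper's proof. The paper also fixes a reference menu $(Z,Z')$ with $Z,Z'$ identically distributed (your diagonal choice $Z=Z'=S_0$ is a special case). It mixes a high-ratio pair $X\sccv Y$ in on an independent event $\{\Lambda\le\lambda\}$, which preserves the ratio while $S_\lambda\to Z$ and $R_\lambda\to Z'$. It then uses continuity of $\sigma_A,\sigma_B$ together with \cref{lem:uniform} to obtain linear bounds near $t=0$ that hold uniformly for all small $\lambda$, and concludes with the same chain of inequalities as in \cref{lem:par_seq}, just as you do.
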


\begin{proof}[Proof of \Cref{lem:ratio_implies_MD}]
     Let $U_A$ and $U_B$ be as in the statement of the lemma. Let $Z,Z',\Lambda \in \cL$ such that $Z$ and $Z'$ have the same distribution, and $\Lambda$ is  uniformly distributed on $[0,1]$ and independent of $Z$ and $Z'$. Let $F^A(t)=F_A\left(\frac{t}{\sigma_A(Z,Z')}\right)$ and $F^B(t)=F_B\left(\frac{t}{\sigma_B(Z,Z')}\right)$ and let $f^A(t)=\frac{d}{dt}F^A(t)$ and $f^B(t)=\frac{d}{dt}F^B(t)$. Let $\varepsilon_A > f^A(0)$ and $0<\varepsilon_B < f^B(0)$, which is possible since $f^B(0)$ is positive. Since $f^A$ and $f^B$ are continuous at $0$, there is $\delta>0$ such that for all $0 < t < \delta$, $f^A(t) < \varepsilon_A$ and $f^B(t) > \varepsilon_B$.
 
 By assumption, there are lotteries $X\sccv Y$ such that \[\frac{U_B(X)-U_B(Y)}{U_A(X)-U_A(Y)} \ge  \frac{\varepsilon_A}{\varepsilon_B}.\]
 
Since the inequality only depends on the distributions of $X$ and $Y$, we may choose $X$ and $Y$ to be independent of $\Lambda$. We define the random variables $S_\lambda$ and $R_\lambda$ by
\[
S_\lambda(\omega)=
\begin{cases}
    X(\omega) & \Lambda(\omega)\le \lambda \\
    Z(\omega) & \Lambda(\omega)> \lambda, 
\end{cases} \qquad
R_\lambda(\omega)=
\begin{cases}
    Y(\omega) & \Lambda(\omega)\le \lambda \\
    Z'(\omega) & \Lambda(\omega)> \lambda .
\end{cases}
\]

Note that for $\lambda \in (0,1)$,  $S_\lambda \sccv R_\lambda$ and 
    \[
    U_A(S_\lambda)-U_A(R_\lambda) = \lambda (U_A(X)-U_A(Y)), 
    \] since $U_A(Z)=U_A(Z')$. Hence, 
    
    \begin{align}\label{eq:mix_ratio}
        \frac{U_B(S_\lambda)-U_B(R_\lambda)}{U_A(S_\lambda)-U_A(R_\lambda)}=\frac{U_B(X)-U_B(Y)}{U_A(X)-U_A(Y)} \ge  \frac{\varepsilon_A} {\varepsilon_B},
    \end{align}
and $U_A(S_\lambda)-U_A(R_\lambda),U_B(S_\lambda)-U_B(R_\lambda) > 0$.

Let $F_\lambda^A(t)=F_A\left(\frac{t}{\sigma_A(S_\lambda,R_\lambda)}\right)$, $F_\lambda^B(t)=F_B\left(\frac{t}{\sigma_B(S_\lambda,R_\lambda)}\right)$ and let $f^A_\lambda(t)=\frac{d}{dt}F^A_{\lambda}(t)$ and $f^B_\lambda(t)=\frac{d}{dt}F^B_{\lambda}(t)$. Since $S_\lambda\to Z$ and $R_\lambda\to Z'$ as $\lambda\to 0$, continuity of $\sigma_A$ and $\sigma_B$, i.e., $\sigma_A(S_\lambda,R_\lambda)\to \sigma_A(Z,Z
')$ and $\sigma_B(S_\lambda,R_\lambda)\to \sigma_B(Z,Z')$, imply that $f^A_\lambda \to f^A$ and $f^B_\lambda \to f^B$ compactly (\cref{lem:uniform}). Thus, there is $\delta' > 0$ and $\lambda_A$ such that for all $\lambda\in (0,\lambda_A)$ and all $ t \in [0, \delta']$, $f^A_\lambda(t) < \varepsilon_A$ and $F^A_\lambda(t) < \frac{1}{2}+t\varepsilon_A$. Likewise, there is $\lambda_B$ such that for all $\lambda \in (0, \lambda_B)$ and all $t \in [0,\delta']$, $F^B_\lambda(t) > \frac{1}{2}+t\varepsilon_B$. 

For $\lambda$ small enough, \[U_A(S_\lambda)-U_A(R_\lambda) < \delta' \quad \text{ and } \quad     U_B(S_\lambda)-U_B(R_\lambda) < \delta'.\]
Thus, for $\lambda$ small enough, we have
%$\lambda = \min \{\lambda_f,\lambda_g\}$
\begin{align*}
F^B_\lambda(U_B(S_\lambda)-U_B(R_\lambda)) &> \frac{1}{2}+(U_B(S_\lambda)-U_B(R_\lambda)) \varepsilon_B \\
        &\ge  \frac{1}{2}+(U_A(S_\lambda)-U_A(R_\lambda)) \varepsilon_A\\
        &> F_\lambda^A(U_A(S_\lambda)-U_A(R_\lambda)).
    \end{align*}
The second inequality follows from \eqref{eq:mix_ratio}.

\end{proof}

\begin{proof}[Proof of \cref{th:finalboss}]
    Let $U_A$ be more risk-averse than $U_B$. In \cref{sec:menu_dep}, we showed that if $ U_A\rel U_B$ then there are continuous (indeed, constant) $\sigma_A$ and $\sigma_B$ and $F \in \cF$ such that there are no reversals under $(U_A,F,\sigma_A)$ and $(U_B,F,\sigma_B)$. We now show that when there will always be reversals if $U_A$ is not relatively more concave than $U_B$. Indeed, suppose that, for all $k>0$, it is not the case that $kU_A- U_B$ is concave, i.e., $kU_A$ is not absolutely more concave than $U_B$. Then, by \cref{prop:mon}, for each $k>0$, there are lotteries $S\sccv R$ such that 
    \[
    k(U_A(S)-U_A(R)) < U_B(S)-U_B(R).
    \] Hence, the result follows from \cref{lem:ratio_implies_MD}.
\end{proof}

\section{Other Comparative Risk Notions}\label{sec:notions}

Note that $U_A$ is more risk-averse than $U_B$ in the Arrow-Pratt sense if and only if, for all $x$, 
\begin{equation*}
\frac{u_B''(x)}{u_A''(x)} \le \frac{u_B'(x)}{u_A'(x)}.
\end{equation*}
While stochastic comparative risk depends on bounding the ratio of second derivatives by a constant, the traditional Arrow-Pratt notion of comparative risk
depends on bounding this ratio by the ratio of marginal utilities. Thus, when $u_B'(x)/u_A'(x)$ is bounded, the traditional notion of comparative risk is sufficient for the stochastic notion. On the other hand, \citeauthor{ross1981some}' \citeyearpar{ross1981some} stronger comparative notion requires \begin{equation}\label{eq:Ross}
\frac{u_B''(x)}{u_A''(x)} \le k \le \frac{u_B'(x)}{u_A'(x)},
\end{equation} for some $k > 0$ and all $x$, implying the boundedness of $u_B''/u_A''$, i.e., $u_A\rel u_B$. Moreover, \cite{ross1981some} shows \eqref{eq:Ross} is equivalent to the existence of $k$ such that $ku_A-u_B$ is concave and decreasing, while $u_A\rel u_B$ does not require $ku_A-u_B$ to be decreasing.

In fact, if $ku_A-u_B$ is concave and furthermore increasing, then $U_A=\E{u_A}$ and $U_B=\E{u_B}$ do not imply reversals under the Fechnerian or continuous WEU frameworks even for lotteries ordered by the more general increasing concave order.\footnote{We say that $X$ dominates $Y$ in the increasing concave order and write $X\ge_{\rm icv} Y $ if $E{u(X)}\ge E{u(Y)}$ for all increasing and concave functions $u$.} That is, we can accommodate 
\[
\rho_A(X,Y)\ge \rho_B(X,Y),
\] for all $X\ge_{\rm icv} Y$. Indeed, our proofs can be adapted to show that the existence of $k>0$ such that $ku_A-u_B$ is increasing and concave characterizes ``no reversals'' in the increasing concave order.

For example, the utility families constructed in \cref{rem:linex} satisfy this stronger condition and thus can avoid reversals for the larger increasing concave order. Moreover, it is easy to see that this mixture family can more generally avoid reversals in the even larger induced $\Omega$-order introduced by \cite{apesteguia2018monotone}.\footnote{Given a parameterized family of utility functions $\{U_\omega\}_{\omega \in \Omega}$ where $\Omega\subset \R$, we say that $X$ dominates $Y$ in the $\Omega$-order if there is an $\omega \in \Omega$ such that $U_\omega(X)\ge U_\omega(Y)$ and, moreover, for all $\omega'< \omega''$, $U_{\omega'}(X)\ge U_{\omega'}(Y)\implies U_{\omega''}(X)\ge U_{\omega''}(Y)$.} 
\section{Utilities Without Constant Risk Aversion}\label{ap:non_ex}

\subsection{Expo-power utilities}
We illustrate that the expo-power utility function, a two-parameter family proposed by \cite{saha1993expo} to capture increasing/decreasing absolute/relative risk aversion,\footnote{Notably, \cite{holt2002risk} use the expo-power utility function to model increasing relative risk aversion exhibited in their data.

}  suffers from the same problems as CARA and CRRA utilities. The increasing and concave expo-power utility function is given by 
\[u_{a,r}(x)=\frac{1-\ee^{-ax^{1-r}}}{a},\] for $x \ge 0$ and positive $a$ and $0\le r < 1$. Note that the absolute risk aversion is given by \[A_{a,r}(x)=\frac{r}{x}+a(1-r)x^{-r},\] for $x > 0$, which is decreasing for $r>0$ and constant for $r=0$.  Thus, unlike CARA or CRRA utility, the family of expo-power utilities, parameterized by $a$ and $r$, is not totally ordered by absolute/relative risk aversion. 

Importantly, for any distinct pairs of coefficients $(a_1,r_1)$ and $(a_2,r_2)$ such that $A_{a_1,r_1}(x) \ge A_{a_2,r_2}(x)$ for all $x$, so that $U_1=\E{u_{a_1,r_1}(X)}$ is more risk-averse than $U_2=\E{u_{a_2,r_2}(X)}$, it must hold that $r_1=r_2$ and $a_1 > a_2$. A simple calculation shows that when $r_1=r_2$ and $a_1 > a_2$, $u_{a_2,r_2}''(x)/u_{a_1,r_1}''(x)$ tends to infinity as $x \to \infty$.
Thus, we have the following corollary.
\begin{corollary}\label{cor:expo}
Let $U_A,U_B$ be expo-power utilities with $U_A$ more risk-averse. Then $U_A,U_B$ imply reversals in the continuous WEU framework.
\end{corollary}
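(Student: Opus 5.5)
The corollary should follow from \cref{th:finalboss}: two expo-power utilities with $U_A$ more risk-averse imply reversals as soon as $U_A \norel U_B$. For twice differentiable utilities with $u_A''<0$, this is the unboundedness of $u_B''/u_A''$. The plan has two steps. First, show that ``more risk-averse'' forces $r_1=r_2=r$ and $a_1>a_2$. Second, show that $u_{a_2,r}''(x)/u_{a_1,r}''(x)\to\infty$ as $x\to\infty$.

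For the first step, compare the absolute risk aversions $A_{a,r}(x)=\frac{r}{x}+a(1-r)x^{-r}$ as $x\to 0^+$ and as $x\to\infty$.
\begin{itemize}
\item \emph{As $x\to 0^+$.} The term $r/x$ dominates, so $A_{a_1,r_1}\ge A_{a_2,r_2}$ near $0$ forces $r_1\ge r_2$.
\item \emph{As $x\to\infty$.} The term $a(1-r)x^{-r}$ dominates, because $x^{-r}\gg x^{-1}$ when $r<1$. The smaller exponent $r$ decays more slowly, so the inequality near infinity forces $r_1\le r_2$.
\end{itemize}
Hence $r_1=r_2$. Then $A_{a_1,r}-A_{a_2,r}=(a_1-a_2)(1-r)x^{-r}$, so $a_1>a_2$ because the pairs are distinct. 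One can also check directly that the pair is more risk-averse in the strict sense, i.e.\ the reverse weak comparison fails.

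For the second step, compute
\[
u'_{a,r}(x)=(1-r)x^{-r}\ee^{-ax^{1-r}},\qquad u''_{a,r}(x)=-u'_{a,r}(x)\,A_{a,r}(x).
\]
This gives
\[
\frac{u''_{a_2,r}(x)}{u''_{a_1,r}(x)}=\ee^{(a_1-a_2)x^{1-r}}\cdot\frac{A_{a_2,r}(x)}{A_{a_1,r}(x)}.
\]
The ratio $A_{a_2,r}/A_{a_1,r}$ is positive and tends to $a_2/a_1$ when $r<1$ (for $r=0$ it is constant). The exponential factor diverges because $1-r>0$. So the ratio is unbounded, which gives $U_A\norel U_B$, and \cref{th:finalboss} yields the result.

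The main obstacle is domain bookkeeping. Expo-power utilities live on $x\ge0$, and when $r>0$ the second derivative blows up at $0$. I would restrict attention to lotteries bounded away from $0$, as the paper does for CRRA. Equivalently, I would note that the necessity argument of \cref{th:finalboss} (via \cref{prop:mon} and \cref{lem:ratio_implies_MD}) only requires a violation of concavity of $ku_A-u_B$ at some points of the domain, and the divergence at large $x$ supplies such points for every $k$.
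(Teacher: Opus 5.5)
Your proposal is correct and follows the same route as the paper. The paper also argues that the Arrow--Pratt comparison forces $r_1=r_2$ and $a_1>a_2$, then that $u_{a_2,r}''/u_{a_1,r}''\to\infty$ as $x\to\infty$, and then applies \cref{th:finalboss}; you supply the asymptotic comparisons at $x\to0^+$ and $x\to\infty$ and the explicit ratio $\ee^{(a_1-a_2)x^{1-r}}A_{a_2,r}/A_{a_1,r}$ that the paper leaves as ``a simple calculation.''
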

\Cref{cor:expo} highlights that the paradoxical properties arising from CARA (CRRA) utilities are pervasive, and that they do not depend on the strong assumption of constant absolute (relative) risk aversion. 

\subsection{Hyperbolic Absolute Risk Aversion (HARA) Utilities}\label{ap:HARA}

\begin{proof}[Proof of \cref{prop:HARA}]
First, note that if $u_A$ is more risk-averse than $u_B$, then the reciprocal of risk aversion, given by $T_i(x)=\alpha_i+\beta_i x$ must be higher for $B$ than $A$ for all $x>0$, so $\alpha_A \le \alpha_B$ and $\beta_A \le \beta_B$, with at least one of these inequalities strict. 

Note that $u_i''(x)=-T_i(x)^{\frac{-1-\beta_i}{\beta_i}}$. In order to have $u_A \rel u_B$, 
\[\frac{u_B''(x)}{u_A''(x)}=\frac{T_A(x)^{1+\frac{1}{\beta_A}}}{T_B(x)^{1+\frac{1}{\beta_B}}}\] must be bounded which only holds when $\beta_A \ge \beta_B$. Thus, for $u_A$ to be more risk-averse than $u_B$ and $u_A \rel u_B$, it must be that $\beta_A=\beta_B$ and $\alpha_A < \alpha_B$.
\end{proof}

\section{Sum-ex and Sum-power Utility Functions}\label{ap:sum}
In \cref{sec:mono}, we demonstrated that for $h>\ell \ge 0$ the utility families 
\[
u_a(x;h,\ell)=a\cdot \scara_h(x)+(1-a)\cdot\scara_{\ell}(x) \qquad \qquad a\in (0,1)
\] and 
\[
v_a(x;h,\ell)=a\cdot \scrra_h(x)+(1-a)\cdot\scrra_{\ell}(x) \qquad \qquad a\in (0,1),
\] which are parametrized by the coefficient $a$ placed on the more risk-averse $\scara$ or $\scrra$ utility do not imply reversals in the FEU and continuous WEU frameworks. A utility function taking the form $u_a(x;h,\ell)$ is a sum-ex utility and one taking the form $v_a(x;h,\ell)$ is a sum-power utility. In this appendix, we briefly discuss the axioms that characterize the sum-ex and sum-power families as well as an interpretation of the risk parameter $a$. 

\cite{bell1988one,Bell1996MeasuringRiskReturnPortfolios}
showed that sum-ex and sum-power utilities are characterized by \emph{one-switch} properties on preferences over gambles and decreasing absolute risk aversion. 

\begin{definition}\label{def:switch}
    We say that a Bernoulli utility $u$ satisfies the \emph{additive one-switch rule} if, for all $X$ and $Y$ 
    \[
    \E{u(X+t)-u(Y+t)}
    \] is single-crossing in $t$. Likewise, we say that $u$ satisfies the \emph{multiplicative one-switch rule} if, for all $X$ and $Y$ 
    \[
    \E{u(kX)-u(kY)}
    \] is single-crossing in $k>0$.
\end{definition}
The family of CARA utilities is similarly characterized by a zero-switch condition, which requires that preferences over gambles are invariant to background wealth. However, we often observe individuals who become less risk-averse as their wealth levels increase, exhibiting \emph{decreasing absolute risk aversion} (DARA). Formally, we say that a Bernoulli utility function $u$ exhibits DARA if $-u''(x)/u'(x)$ is decreasing in $x$.

As argued by \cite{bell1988one}, the additive one-switch property is normatively appealing under DARA, since one may start to prefer a riskier gamble once their wealth exceeds some point and their aversion to risk diminishes. However, this preference should not be reverted at even higher levels of background wealth, at which the individual is even more risk tolerant. Similarly, \cite{Bell1996MeasuringRiskReturnPortfolios} argues that the multiplicative one-switch rule is a normatively appealing criterion for an individual comparing two investments with different stochastic returns. 
\cite{bell1988one,Bell1996MeasuringRiskReturnPortfolios} and \cite{pedersen2001characterisation} show that a concave infinitely differentiable Bernoulli utility function $u$ that exhibits DARA satisfies the additive (multiplicative) one-switch rule if and only if $u$ is a sum-ex (sum-power) utility function.

\medskip
We next provide an interpretation of the risk coefficient $a$ that is to be estimated for these families. Specifically, we show that the coefficient $a$ can be interpreted as a background wealth level. That is, the preference of an individual with a lower coefficient $a$ is identical to that of an individual with a larger coefficient and more background wealth. This equivalence is formalized in the next proposition.

\begin{proposition}\label{prop:coeff}
Let $1>a>b >0$ and $h>\ell>0$. Then there are $\alpha,\beta,t,k > 0$, $\gamma,\delta \in \R$ such that
\begin{align}\label{eq:coef}
u_b(x;h,\ell)&=\alpha\cdot u_a(x+t;h,\ell)+\gamma \qquad \text{for all } x\in \R\\
v_b(x;h,\ell)&=\beta \cdot v_a(k\cdot x;h,\ell)+\delta \qquad\,\,\, \text{for all } x>0. \label{eq:coef2}
\end{align}
\end{proposition}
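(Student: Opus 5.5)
The plan is to compute directly, using the exponential and power structure of the two component utilities. For the sum-ex case, expand
\[
u_a(x+t;h,\ell)=a\,\frac{1-\ee^{-h(x+t)}}{h}+(1-a)\,\frac{1-\ee^{-\ell(x+t)}}{\ell}.
\]
Up to an additive constant this equals $-\frac{a}{h}\ee^{-ht}\ee^{-hx}-\frac{1-a}{\ell}\ee^{-\ell t}\ee^{-\ell x}$. Similarly, $u_b(x)$ equals, up to a constant, $-\frac{b}{h}\ee^{-hx}-\frac{1-b}{\ell}\ee^{-\ell x}$. Since $h\neq \ell$, the functions $\ee^{-hx}$ and $\ee^{-\ell x}$ are linearly independent, so matching coefficients turns \eqref{eq:coef} into the two equations
\[
\alpha a\,\ee^{-ht}=b,\qquad \alpha(1-a)\,\ee^{-\ell t}=1-b.
\]
Dividing them gives
\[
\ee^{-(h-\ell)t}=\frac{b(1-a)}{a(1-b)}.
\]
Because $a>b$, the right-hand side lies in $(0,1)$, so there is a unique $t=\frac{1}{h-\ell}\ln\frac{a(1-b)}{b(1-a)}>0$. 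We then set $\alpha=\frac{b}{a}\ee^{ht}>0$, and $\gamma$ absorbs the constants. The assumption $\ell>0$ (rather than $\ell\ge 0$) guarantees that $\scara_\ell$ is a genuine exponential, so the matching argument applies.

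For the sum-power case the argument is the same, with $k^{1-h}$ and $k^{1-\ell}$ playing the role of the exponentials. When $h,\ell\neq 1$,
\[
v_a(kx)=a\,\frac{k^{1-h}x^{1-h}-1}{1-h}+(1-a)\,\frac{k^{1-\ell}x^{1-\ell}-1}{1-\ell}.
\]
Matching the coefficients of $x^{1-h}$ and $x^{1-\ell}$ requires $\beta a k^{1-h}=b$ and $\beta(1-a)k^{1-\ell}=1-b$. Dividing gives
\[
k^{\ell-h}=\frac{b(1-a)}{a(1-b)}\in(0,1).
\]
Since $h>\ell$, this yields a unique $k>1$. We then set $\beta=\frac{b}{a}k^{h-1}>0$ and choose $\delta$ to fix the constants.

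The only delicate step is the log case, where $h=1$ or $\ell=1$. There $\scrra_1(kx)=\ln x+\ln k$, so the scaling contributes only an additive constant and no multiplicative factor. I would treat this case separately. The coefficient equations become $\beta a=b$ with $\beta(1-a)k^{1-\ell}=1-b$ when $h=1$, and symmetrically when $\ell=1$. For instance, if $h=1>\ell$, the equations give $k^{1-\ell}=\frac{a(1-b)}{b(1-a)}>1$, so $k>1$. The extra $\ln k$ term is absorbed into $\delta$. Equivalently, the general formulas extend continuously to this case, using that $x^{1-c}/(1-c)$ has the linearly independent family $\{x^{1-h},x^{1-\ell},\ln x,1\}$ as its basis.

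Positivity of every parameter then follows from $1>a>b>0$. I expect no serious obstacle beyond this bookkeeping of the log case.
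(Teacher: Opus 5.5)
Your proposal is correct and follows the paper's proof. The paper writes down $t=\frac{1}{h-\ell}\ln\frac{a(1-b)}{b(1-a)}$, $k=\bigl(\frac{a(1-b)}{b(1-a)}\bigr)^{1/(h-\ell)}$ and the corresponding $\alpha,\beta,\gamma,\delta$, checks them by substitution, and handles $h=1$ or $\ell=1$ by taking limits; these are exactly the solutions of your coefficient-matching equations, and your direct treatment of the logarithmic case (absorbing $\ln k$ into $\delta$) is, if anything, more explicit than the paper's.
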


\begin{proof}[Proof of \cref{prop:coeff}]
For sum-ex utility, let
\[
t=\frac{1}{h-\ell}\ln\!\Big(\frac{a(1-b)}{b(1-a)}\Big),\quad
\alpha =\frac{1-b}{1-a}\Big(\frac{a(1-b)}{b(1-a)}\Big)^{\frac{\ell}{h-\ell}},\quad
\gamma=\frac{b-\alpha \cdot a}{h}+\frac{(1-b)-\alpha(1-a)}{\ell}.
\]

For sum-power utility, if $\ell \neq 1$ and $h\neq 1$ let 
\[
k=\Big(\frac{a(1-b)}{b(1-a)}\Big)^{\frac{1}{h-\ell}}, \quad 
\beta=\frac{1-b}{1-a}\Big(\frac{a(1-b)}{b(1-a)}\Big)^{\frac{\ell -1}{h-\ell}},\quad 
\delta=\frac{a\cdot \beta-b}{1-h}+\frac{\beta(1-a)-(1-b)}{1-\ell}.
\]

Plugging in these parameters, we obtain \eqref{eq:coef} and \eqref{eq:coef2}. For $h$ or $\ell$ equal to $1$, taking the limits of the expressions for $\beta$ and $\delta$ works as well, completing the proof.
\end{proof}
\cref{prop:coeff} highlights a benefit that sum-ex and sum-power share with CARA and CRRA in the deterministic setting. One important feature of CARA utility is that background wealth does not affect preferences and thus does not need to be separately estimated. By \cref{prop:coeff}, the estimation of the coefficient on the higher CARA utility can itself be interpreted as an estimation of background wealth which does not, therefore, need to be separately estimated. Likewise, CRRA is often used as a preference specification over investments where it has the benefit that the level of stakes to be invested do not need to be estimated since they do not affect the preference. While sum-power preferences are not invariant to changes in stakes, the coefficient on the higher CRRA utility can be interpreted as revealing the level of stakes to be invested, which do not then need to be separately estimated.

\section{APU Models}\label{ap:APU}

To prove \cref{th:apu_rel}, 
we first show that the APU representation $(U,c)$ for $\rho$ reduces to the FEU model of \cref{sec:main} on binary menus under a smoothness assumption on the cost function. Recall that a cost function is a function $c \colon [0,1] \to \R \cup \{\infty\}$ that is strictly convex and continuously differentiable on $(0,1)$ and satisfies $\lim_{p \to 0}c'(p)=-\infty$. Also, recall that $\cF$ is the set of all $\sm$ functions $F \colon \R \to [0,1]$ satisfying $F(t)+F(-t)=1$.

\begin{lemma}\label{lem:apu}

Let $\rho$ have the smooth APU representation $(U,c)$. Then for all binary menus, \[\rho(X,Y)=F(U(X)-U(Y))\] for some $F \in \cF$. Moreover, for $p\in (0,1)$, $F^{-1}(p)=c'(p)-c'(1-p)$.

\end{lemma}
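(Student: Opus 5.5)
On a binary menu $M=\{X,Y\}$ I will write the choice as $p=\rho^M(X)$ and $1-p=\rho^M(Y)$. The APU problem \eqref{eq:APU} then becomes a one-dimensional maximization over $p\in[0,1]$:
\[
\phi(p)=U(X)p+U(Y)(1-p)-c(p)-c(1-p).
\]
The plan has three steps: characterize the maximizer by a first-order condition, invert that condition to define $F$, and then check that $F\in\cF$.

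\textbf{Interior first-order condition.} Since $c$ is strictly convex on $(0,1)$, $\phi$ is strictly concave, so the maximizer is unique. Because $\lim_{p\to0}c'(p)=-\infty$, we get $\phi'(p)\to+\infty$ as $p\to0$ and $\phi'(p)\to-\infty$ as $p\to1$. Hence neither endpoint is optimal, and the maximizer is interior. It is characterized by
\[
U(X)-U(Y)=c'(p)-c'(1-p).
\]

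\textbf{Definition and inversion.} Define $G(p)=c'(p)-c'(1-p)$ on $(0,1)$. By strict convexity, $c'$ is strictly increasing, so $G$ is strictly increasing. It is continuous, with limits $-\infty$ at $0$ and $+\infty$ at $1$. So $G$ is a bijection $(0,1)\to\R$, and I set $F=G^{-1}$. Then $\rho(X,Y)=F(U(X)-U(Y))$, which depends only on the utility difference, and $F^{-1}(p)=c'(p)-c'(1-p)$ by construction.

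\textbf{Membership in $\cF$.} Symmetry is immediate: $G(1-p)=-G(p)$, and this gives $F(-t)=1-F(t)$. For smoothness, the hypothesis $c\in C^2_+$ means $c'$ is $C^1$ with $c''>0$. Hence $G'(p)=c''(p)+c''(1-p)>0$ is continuous. By the inverse function theorem, $F$ is $C^1$ with
\[
F'(t)=\frac{1}{G'(F(t))}>0.
\]
This places $F\in\cF$.

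The only step needing care is the interior-solution argument at the boundary. The function $c$ may take the value $\infty$ at the endpoints, so I will argue via monotonicity of $\phi'$ on $(0,1)$ together with continuity/limit behavior of $\phi$ at the endpoints. In particular, if $c(0)$ is finite, $\phi$ near $0$ still increases because its derivative blows up. Everything else is routine.
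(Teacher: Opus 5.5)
Your proposal is correct and follows essentially the same route as the paper: define $g(p)=c'(p)-c'(1-p)$, show it is a strictly increasing $C^1$ bijection $(0,1)\to\R$ with $g(1-p)=-g(p)$, take $F=g^{-1}$ via the inverse function theorem, and read off $F(U(X)-U(Y))=p$ from the first-order condition. Your argument that the maximizer is unique and interior is extra care the paper omits. In the finite-$c(0)$ case, the one ingredient that argument needs is that convexity of $c$ on $[0,1]$ forces $c$ at each endpoint to be no smaller than its one-sided limit there. This gives $\phi(0)\le\lim_{p\downarrow 0}\phi(p)$, and the blow-up of $\phi'$ then rules out the corner.
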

\begin{proof}[Proof of \cref{lem:apu}]

Fix an APU model $(U,c)$ and define $g(p)=c'(p) - c'(1-p)$. Since $c\in C_+^2$, $g\in C_+^1$. Moreover, since $g$ must satisfy $g(1-p)=-g(p)$ and $\lim_{p\to 0}c'(p)=-\infty$, we have  $\lim_{p \to 0}g(p)=-\infty$ and $\lim_{p \to 1}g(p)=\infty$. Let $F\colon \R \to (0,1)$ be the inverse of $g$. By the inverse function theorem, $F \in C_+^1$. Moreover, we have $1-F(t)=F(-t)$, so $F\in \cF$.

Finally, we show that the Fechnerian model $(U,F)$ and the APU model $(U,c)$ coincide for binary menus. Fix a binary menu $\{X, Y\}$ and let $p$ be the choice probability of $X$ under the APU model $(U,c)$. The maximization of the APU model yields the first-order condition:
\begin{align*}
    U(X) - U(Y) = c'(p) - c'(1-p)=g(p).
\end{align*} 
Applying $F$ to the first and last terms, we obtain $F(U(X)-U(Y))=p$.

\end{proof}

We now generalize \cref{prop:mon} to the broader APU framework.

\begin{lemma}\label{th:apu}
Let $c \in C_{+}^2$ and let $U_A$ more risk-averse than $U_B$. Then there are no compound reversals under the APU models $(U_A,c)$ and $(U_B,c)$ if and only if $U_A \abs U_B$.

\end{lemma}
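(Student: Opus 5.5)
The plan is to handle the two directions separately. Necessity will reduce to the binary Fechnerian case through \cref{lem:apu} and \cref{prop:mon}. Sufficiency will use a revealed-preference (cyclic monotonicity) argument on the APU optimization problem, tested against the single concave function $h=u_A-u_B$.

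\emph{Necessity.} Suppose $U_A \abs U_B$ fails. By \cref{lem:apu}, both $(U_A,c)$ and $(U_B,c)$ coincide on binary menus with Fechnerian models $(U_A,F)$ and $(U_B,F)$ sharing the same $F\in\cF$, since $F$ depends only on $c$. By \cref{prop:mon} there are lotteries $S\sccv R$ with $\rho_A(S,R)<\rho_B(S,R)$. Take $M=\{S,R\}$. For any concave $u$,
\[
\sum_{X\in M}\E{u(X)}\bigl(\rho^M_B(X)-\rho^M_A(X)\bigr)=\bigl(\rho_B(S,R)-\rho_A(S,R)\bigr)\bigl(\E{u(S)}-\E{u(R)}\bigr)\ge 0 .
\]
For a strictly concave $u$ we have $\E{u(S)}>\E{u(R)}$ because $S\sccv R$, so the inequality is strict. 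Hence $M$ gives a compound reversal.

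\emph{Sufficiency.} Suppose $h=u_A-u_B$ is concave, and fix a finite menu $M$. Write $p_A=\rho^M_A$, $p_B=\rho^M_B$, $C(p)=\sum_{X\in M}c(p(X))$, and $H(X)=\E{h(X)}=U_A(X)-U_B(X)$. Since $c$ is strictly convex, each APU maximizer is unique. So if $p_A\neq p_B$, the optimality of $p_A$ for $A$ and of $p_B$ for $B$ give the strict inequalities
\[
\sum_X U_A(X)p_A(X)-C(p_A) > \sum_X U_A(X)p_B(X)-C(p_B),
\qquad
\sum_X U_B(X)p_B(X)-C(p_B) > \sum_X U_B(X)p_A(X)-C(p_A).
\]
Adding them cancels the cost terms and leaves $\sum_X H(X)\bigl(p_A(X)-p_B(X)\bigr)>0$.

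A compound reversal at $M$ would require $\sum_X \E{u(X)}(p_B(X)-p_A(X))\ge 0$ for every concave $u$. Taking $u=h$ contradicts the strict inequality just derived, so $p_A\neq p_B$ is impossible. If instead $p_A=p_B$, both sides of the defining inequality coincide for every $u$, so no strict inequality can hold. Either way $M$ yields no compound reversal.

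The only delicate points are the strictness of the optimality inequalities and checking that $h$ is an admissible test function in the definition of a compound reversal. Strictness comes from uniqueness of the maximizer, which follows from strict convexity of $c$ together with the boundary condition $\lim_{p\to 0}c'(p)=-\infty$, which keeps the optimum interior. The function $h$ is a concave real function by assumption, so it is admissible. I therefore expect no substantial obstacle beyond this bookkeeping.
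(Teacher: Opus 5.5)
Your proposal is correct and follows essentially the same argument as the paper. Sufficiency adds the two strict optimality inequalities so that the cost terms cancel, then tests against the concave function $u_A-u_B$; necessity reduces to binary menus via \cref{lem:apu} and \cref{prop:mon}, and your explicit check that a binary reversal is a compound reversal (strictness via a strictly concave $u$) spells out a step the paper only asserts.
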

 
\begin{proof}[Proof of \cref{th:apu}]
Let $\rho_A$ and $\rho_B$ have the APU representations $(U_A,c)$ and $(U_B,c)$, respectively. We first show that if $U_A \abs U_B$, there are no compound reversals. 

    Fix a menu $M$. Since $c$ is strictly convex, $\rho_A^M$ is the unique argmax of
    \[
    \sum_{X\in M}U_A(X)p(X)-c(p(X)),
    \] and $\rho_B^M$ is the unique argmax of
    \[
    \sum_{X\in M}U_B(X)p(X)-c(p(X)).
    \] If $\rho_A^M=\rho_B^M$, then there is no compound reversal, so suppose they are distinct.
    We thus have the following two inequalities:
    \begin{align*}
        \sum_{X\in M}U_A(X)\rho_A^M(X)-c(\rho_A^M(X)) &> \sum_{X\in M}U_A(X)\rho_B^M(X)-c(\rho_B^M(X)), \\
        \sum_{X\in M}U_B(X)\rho_B^M(X)-c(\rho_B^M(X)) &> \sum_{X\in M}U_B(X)\rho_A^M(X)-c(\rho_A^M(X)).
    \end{align*}

    Summing the inequalities gives
    \[
        \sum_{X\in M}(U_A-U_B)(X)\rho_A^M(X)> \sum_{X\in M}(U_A-U_B)(X)\rho_B^M(X).
    \]
Since $U_A \abs U_B$, $U_A-U_B=\E{v}$ for some concave function $v$, so we do not have a compound reversal.

Finally, by \cref{lem:apu}, $\rho_A$ and $\rho_B$ coincide with Fechnerian models $(U_A,F)$ and $(U_B,F)$ for binary choice. 
Since a reversal in binary choice is a compound reversal, by \cref{prop:mon}, the converse holds.

\end{proof}

\begin{proof}[Proof of \cref{th:apu_rel}]
First, we show $U_A \rel U_B$ implies that there are $c_A,c_B \in C_+^2$ such that there are no compound reversals under $(U_A,c_A)$ and $(U_B,c_B)$. Let $k > 0$ such that $kU_A \abs U_B$
and let $c \in C_+^2$. Let $\rho_A$ and $\rho_B$ have the APU representations $(kU_A,c)$ and $(U_B,c)$, respectively. By \cref{th:apu}, $\rho_A$ and $\rho_B$ do not yield compound reversals. Since $(U_A,\frac{c}{k})$ represents $\rho_A$ and $(U_B,c)$ represents $\rho_B$, setting $c_A=c/k$ and $c_B=c$ will ensure that there are no compound reversals.

Now suppose $U_A \norel U_B$. Then $\rho_A$ and $\rho_B$ coincide with Fechnerian models $(U_A,F_A)$ and $(U_B,F_B)$ for binary choice by \cref{lem:apu}, so it follows from \cref{th:finalboss} that $U_A$ and $U_B$ imply reversals in binary choice which are also compound reversals.

\end{proof}

\section{Random Parameter Models}\label{app:rpm_reversals}
As discussed in the main text, \cite{apesteguia2018monotone} raise the issue of reversals for CARA and CRRA models of binary choice under homogeneous Fechnerian noise. As an alternative, they advocate for \emph{random parameter} models, in which each individual's vNM utility is randomly drawn from a parameterized family. 

Formally, an \emph{ordered vNM family} is an indexed set of risk-averse vNM utilities $\cU = \{U_a\}_{a \in \cA}$ for some $\cA \subset \R$ and is ordered by risk aversion, i.e., for $a > b$, $U_a$ is more risk-averse than $U_b$. In a random parameter model on $\cU$, each individual $i$ is associated with a baseline coefficient $a_i$ in $\cA$ and additive random shock $\varepsilon_i$ that is non-negative and i.i.d.\ across individuals. We denote by $\xi_i=a_i+\varepsilon_i$ individual $i$'s random risk coefficient and by $U_{\xi_i}$ the corresponding random vNM utility. In this model, individual $i$ chooses lottery $X$ over $Y$ with probability\footnote{Technically, this is not a well-defined probabilistic choice rule when the probability of ties is non-zero. Rather than define a tie-breaking rule, we restrict ourselves to comparisons (and later on menus) in which ties occur with zero probability.}

$$
\rho_i(X,Y)=\mP(U_{\xi_i}(X)\ge U_{\xi_i}(Y)).
$$ 
We say that $A$ is more risk-averse than $B$ if $a_A > a_B$. 

\cite{apesteguia2018monotone} show that random parameter models do not lead to reversals in binary choice settings even with respect to a larger order of riskiness on lotteries that includes the concave order. Clearly, if $S \sccv R$, then every realization of $U_{\xi_i}$ would evaluate $S$ greater than $R$, since every possible realized utility function exhibits risk aversion. Since lotteries dominated in the concave order are never chosen in these models, there cannot, of course, be reversals. Indeed, \cite{apesteguia2018monotone} point out that random parameter models have limitations when applied to stochastic-dominance related lotteries. They suggest adding an additional trembling stage as a way to generate stochasticity for such pairs. 

We next show that the situation is more subtle in multinomial settings, where, for each menu $M$, the choice probabilities are given by  
\[
\rho_i^M(X)=\mP(U_{\xi_i}(X)=\max_{Y\in M} U_{\xi_i}(Y)).
\]

\subsection{Compound Reversals for CARA- and CRRA-based RPMs}\label{app:cara_compound}
We now construct examples of compound reversals in CARA- and CRRA-based random parameter models.
Let $p=10^{-5}$. Consider the menu $M=\{X,Y,Z\}$ with the following distributions:
\[X=
\begin{cases}
    20 & \textrm{w.p. } 1-p\\
    10 & \textrm{w.p. } p
\end{cases}, \quad 
Y=\begin{cases}
    23 & \textrm{w.p. } \frac{1-p}{2}\\
    20 & \textrm{w.p. } \frac{p}{2}\\
    19 & \textrm{w.p. } \frac{1-p}{2}\\
    10 & \textrm{w.p. } \frac{p}{2}\\
\end{cases}, \quad 
Z=\begin{cases}
    27 & \textrm{w.p. } \frac{1-p}{2}\\
    20 & \textrm{w.p. } p\\
    17 & \textrm{w.p. } \frac{1-p}{2}
\end{cases}.\]

For CARA random parameter models, let $\varepsilon$ be uniformly distributed on $[0,0.5]$. When $A$ and $B$ have absolute risk coefficients $a_A\approx 1.1571$ and $a_B=0.1$, under the random parameter model with $\varepsilon$ added to risk coefficients, we have 
 \[\rho^M_A(X)=\rho^M_A(Z)>0 \quad \text{and} \quad \rho^M_B(Y)=1.\] Since $Y \sccv X_\frac{1}{2}Z$, $Y \sccv W$ for any $W$ that is a mixture of $X,Y,Z$ with positive equal weights on $X$ and $Z$. Thus, we have a compound reversal. Similarly, we have a compound reversal of CRRA random parameter models from the same menu, when $A$ and $B$ have relative risk coefficients $a_A \approx 14.033$ and $a_B=2.2$, and $\varepsilon$ is uniformly distributed on $[0,10]$.

\end{document}